\documentclass[11pt]{article}

\newcommand{\ifabs}[2]{#2}

\ifabs{}{}

\usepackage{comment}
\usepackage{cite}
\usepackage{fullpage}
\usepackage{color}
\usepackage{epic}
\usepackage{eepic}
\usepackage{epsfig}
\usepackage{xspace}
\usepackage{graphicx}
\usepackage{amsmath}
\usepackage{amssymb}
\usepackage[ruled,vlined]{algorithm2e}
\ifabs{
\usepackage{times}
}{}

\ifabs{ \excludecomment{fullproof}
\newenvironment{proofsketch}{\noindent{\bf Sketch of Proof: }}{\nopagebreak\rule{1 ex}{0.8 em}\medskip}
\includecomment{abs}
\excludecomment{fullpaper}

}{

\excludecomment{proofsketch} \excludecomment{abs}
\includecomment{fullpaper}
}
 
\excludecomment{suppress}

\newcommand{\todo}[1]{\typeout{TODO: \the\inputlineno: #1}\textbf{[[[ #1 ]]]}}
\renewcommand{\todo}[1]{\typeout{TODO: \the\inputlineno: #1}}

\newcommand{\concept}[1]{\emph{#1}}

\newtheorem{theorem}{Theorem}
\newtheorem{lemma}[theorem]{Lemma}

\newtheorem{definition}[theorem]{Definition}
\newtheorem{definitions}[theorem]{Definitions}

\newcommand{\newloglike}[2]{\newcommand{#1}{\mathop{\rm #2}\nolimits}}
\newloglike{\sgn}{sgn}

\newcommand{\nul}[1]{{\it et al.\/}}

\newenvironment{proof}{\noindent{\bf Proof: }}{\nopagebreak\rule{1 ex}{0.8 em}\medskip}

\newcommand{\deriv}[2]{\frac{\operatorname{d}{#1}}{\operatorname{d}{#2}}}
\newcommand{\pderiv}[2]{\frac{\partial{#1}}{\partial{#2}}}
\newcommand{\blue}{\mathtt{blue}}

\newcommand{\true}{\mathtt{true}}
\newcommand{\false}{\mathtt{false}}

\begin{document}

\ifabs{
\title{Approximate Counting via Correlation Decay in Spin Systems \\ {\Large(Extended Abstract)} }
}{
\title{Approximate Counting via Correlation Decay in Spin Systems }
}
\author{
Liang Li~\thanks{This work was done when these authors visited Microsoft Research Asia.}\\
Peking University\\
\texttt{liang.li@pku.edu.cn}
\and
Pinyan Lu\\
Microsoft Research Asia\\
\texttt{pinyanl@microsoft.com}\and
Yitong Yin\thanks{Supported by the National Science Foundation of China under Grant No. 61003023
and No. 61021062.}~\footnotemark[1]\\
Nanjing University\\
\texttt{yinyt@nju.edu.cn}
}
\date{}

\maketitle

\begin{abstract}
We give the first deterministic fully polynomial-time approximation scheme (FPTAS) for computing the partition function of a two-state spin system on an arbitrary graph, when the parameters of the system satisfy the uniqueness condition on infinite regular trees. This condition is of physical significance and is believed to be the right boundary between approximable and inapproximable.

The FPTAS is based on the correlation decay technique introduced by Bandyopadhyay and Gamarnik [SODA 06] and Weitz [STOC 06].
 The classic correlation decay is defined with respect to graph distance. Although this definition has natural physical meanings, it does not directly support an FPTAS for systems on arbitrary graphs, because for graphs with unbounded degrees, the local computation that provides a desirable precision by correlation decay may take super-polynomial time. We introduce a notion of \concept{computationally efficient correlation decay}, in which the correlation decay is measured in a refined metric instead of graph distance. We use a potential method to analyze the amortized behavior of this correlation decay and establish a correlation decay that guarantees an inverse-polynomial precision by polynomial-time local computation. This gives us an FPTAS for spin systems on arbitrary graphs. This new notion of correlation decay properly reflects the algorithmic aspect of the spin systems, and may be used for designing FPTAS for other counting problems.

\end{abstract}

\ifabs{ \setcounter{page}{0} \thispagestyle{empty} \vfill
\pagebreak }{
}

\section{Introduction}
Spin systems are well studied in Statistical Physics. We focus on two-state spin systems.
An instance of a spin system is a graph $G=(V,E)$. A configuration $\sigma: V\rightarrow \{0, 1\}$ assigns every vertex one of the two states. We shall refer the two states as blue and green.
The contributions of local interactions between adjacent vertices are quantified by a
matrix 
\[A=\begin{bmatrix} A_{0,0} & A_{0,1} \\ A_{1,0} & A_{1,1} \end{bmatrix}=\begin{bmatrix} \beta & 1 \\ 1 & \gamma \end{bmatrix},
\]
where $\beta, \gamma \geq 0$.
The weight of an assignment is the production of contributions of all local interactions and the partition function $Z_A(G)$ of a system is the summation of the weights
over all  possible assignments. Formally,
\[ Z_A(G)=\sum_{\sigma \in 2^{V}} \prod_{(u,v) \in E} A_{\sigma(u), \sigma(v)}.\]
Although originated from Statistical Physics, the spin system is also accepted in Computer Science as a framework for counting problems. Considering the two very well studied frameworks, the weighted Constraint Satisfaction Problems (\#CSP)~\cite{CreignouH96,BulatovD03,weightedCSP,Bulatov08,STOC09,Dyer-Rich,ccl-csp} and Graph Homomorphisms~\cite{DyerG00,BulatovG05,acyclic,GGJT09,Homomorphisms,CaiChen}, the two-state spin systems can be viewed as the most basic setting in these frameworks: A Boolean \#CSP problem with one symmetric binary relation; or Graph Homomorphisms to graph with two vertices. Many natural combinatorial problems can be formulated as two-state spin systems. For example, with $\beta=0$ and $\gamma=1$, $Z_A(G)$ is the number of independent sets (or vertex covers) of the graph $G$.

Given a matrix $A$, it is a computational problem to compute $Z_A(G)$ where graph $G$ is given as input.
We want to characterize the computational complexity of computing $Z_A(G)$ in terms of $\beta$ and $\gamma$.
For \emph{exact} computation of $Z_A(G)$, polynomial time algorithms are known only for the very restricted settings that $\beta \gamma =1$ or $(\beta,\gamma)= (0,0)$, and for all other settings the problem is proved to be \#P-Hard \cite{BulatovG05}.
We consider the approximation of $Z_A(G)$, with the fully polynomial-time approximation schemes (FPTAS) and its randomized relaxation the fully polynomial-time randomized approximation schemes (FPRAS).


In a seminal paper~\cite{JS93}, Jerrum and Sinclair gave an FPRAS when $\beta=\gamma >1$, which was further extended to the entire region $\beta \gamma >1$~\cite{GJP03}.
For $0\leq \beta, \gamma \leq 1$ except that  $(\beta,\gamma)= (0,0)$ or $(1,1)$, Goldberg,  Jerrum and Paterson prove that the problem do not admit an  FPRAS unless NP$=$RP~\cite{GJP03}.
For the other values of the parameters, namely, $0\le \beta<1<\gamma<\frac{1}{\beta}$ or symmetrically $0\le \gamma<1<\beta<\frac{1}{\gamma}$, the approximability of $Z_A(G)$ is not very well understood.
It was shown in~\cite{GJP03} that by coupling a simple heat-bath random walk, there exists an additional region of $\beta$ and $\gamma$ which admit some FPRAS. The true characterization of approximability  is still left open.

Within this unknown region, there lies a critical curve with physical significance, called the uniqueness threshold.
The phase transition of Gibbs measure occurs at this threshold curve.
Such statistical physics phase transitions are believed to coincide with the transitions of computational complexity. However, there are only very few examples where the connection is rigorously proved.
One example is the hardcore (counting independent set) model. It was conjectured in \cite{inapp_MWW09} by  Mossel, Weitz and Wormald, and settled in a line of works by Dyer, Frieze and Jerrum \cite{IS_DFJ02},  Weitz \cite{Weitz06},  Sly \cite{Sly10}, and very recently Galanis, Ge, {\v S}tefankovi{\v c}, Vigoda and Yang \cite{GGSVY11}
that in the hardcore model the uniqueness threshold essentially characterizes the approximability of the partition function. It will be very interesting to observe the similar transition in spin systems.

\subsection{Main results}
We extend the approximable region (in terms of $\beta$ and $\gamma$) of $Z_A(G)$ to the uniqueness threshold in two-state spin systems, which is believed to be the right boundary between approximable and inapproximable.
Specifically, we formulate a criterion for $\beta$ and $\gamma$ such that there is a unique Gibbs measure on all infinite regular trees\footnote{Technically, there is a small integrality gap caused by the continuous generalization of the condition. The formal statement is given in the following section.}, and prove that there is an FPTAS for computing $Z_A(G)$ when this uniqueness condition is satisfied. This improves the approximable boundary (dashed lines in Figure 1) provided by the heat-bath random walk in \cite{GJP03}. Moreover, the algorithm is deterministic.

The FPTAS is based on the correlation decay technique first used in \cite{Weitz06,BG08} for approximate counting. We elaborate a bit on the ideas. A spin system induces a natural probability distribution over all configurations called the Gibbs measure where the probability of a configuration is proportional to its weight. Due to a standard self-reduction procedure, computing $Z_A(G)$ is reduced to computing the marginal distribution of the state of one vertex, which is made plausible by Weitz in \cite{Weitz06} with the self-avoiding walk (SAW) tree construction.
For efficiency of computation, the marginal distribution of a vertex is estimated using only a local neighborhood around the vertex. To justify the precision of the estimation, we show that far-away vertices have little influence on the marginal distribution. This is done by analyzing the rate with which the correlation between two vertices decays as they are far away from each other.

The correlation decay by itself is a phenomenon of physical significance. One of our main discoveries is that two-state spin systems on any graphs have exponential correlation decay when the above uniqueness condition is satisfied.

\ifabs{
\begin{figure}
\centering\includegraphics[width=200pt]{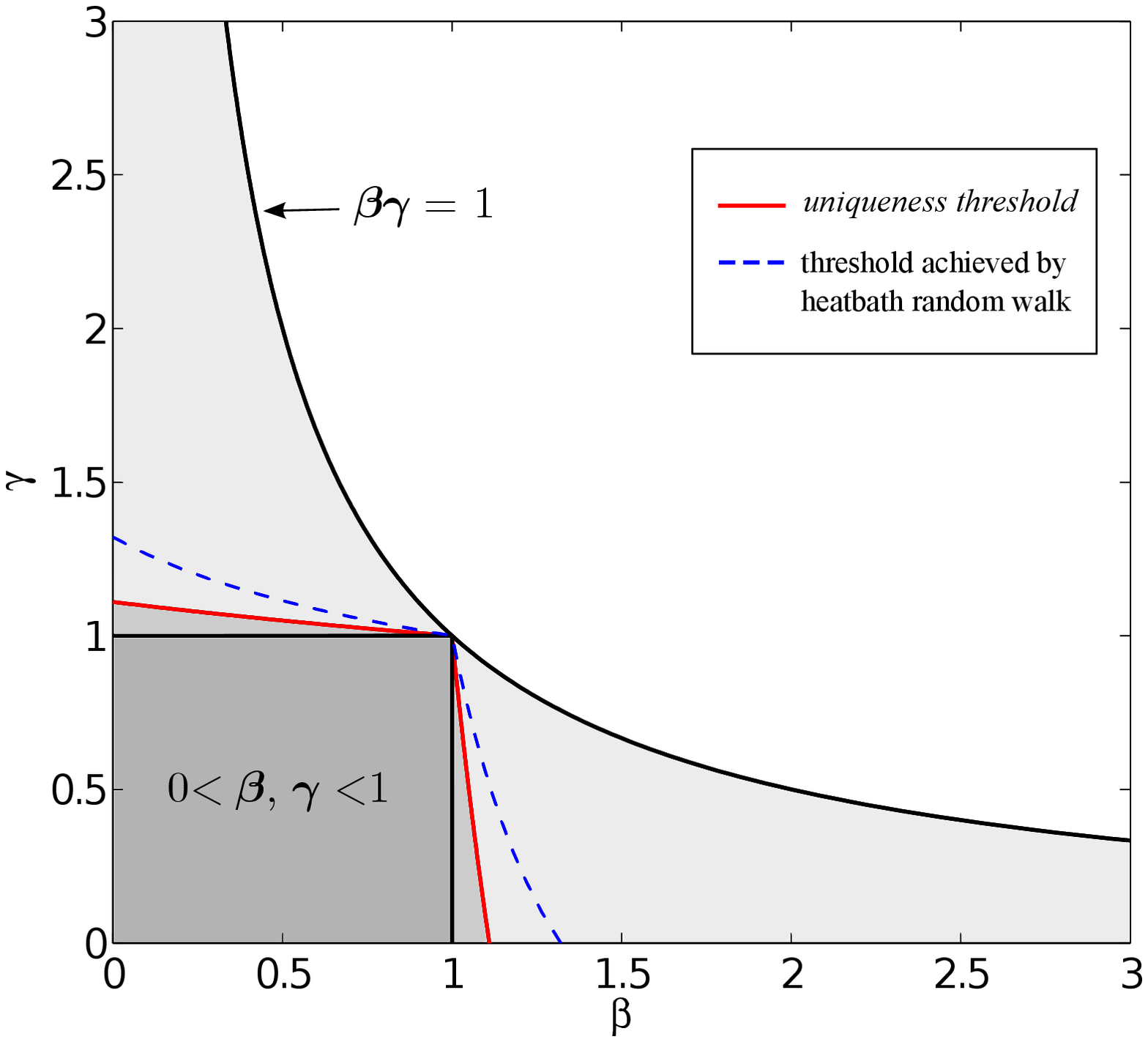}
\caption{Our FPTAS works for the area between the uniqueness threshold and the line $\beta\gamma=1$, and the heat-bath random walk in \cite{GJP03} works between the dashed line and $\beta\gamma=1$.}
\end{figure}
}{
\begin{figure}\label{fig:beta-gamma}
\centering\includegraphics[width=280pt]{twospin.eps}
\caption{Our FPTAS works for the region between the critical curve of the uniqueness threshold and the curve $\beta\gamma=1$. The heat-bath random walk in \cite{GJP03} works for the region between the dashed line and $\beta\gamma=1$.}
\end{figure}
}

\subsection{Technical contributions}
The technique of using correlation decay to design FPTAS for partition functions is developed in the hardcore model. We introduce several new ideas to adapt the challenges arising from spin systems. We believe these challenges are typical in counting problems, and the new ideas will make the correlation decay technique more applicable for approximate counting.

\begin{enumerate}
\item The correlation decay technique used in \cite{Weitz06} relies on a monotonicity property specific to the hardcore model. Correlation decays in graphs are reduced via this monotonicity to the decays in infinite regular trees, while the later have solvable phase transition thresholds.
It was already observed in \cite{Weitz06} that such monotonicity may not generally hold for other models. Indeed, it does not hold for spin systems. We develop a more general method which does not rely on monotonicity: We directly compute the correlation decay in arbitrary trees (and as a result in arbitrary graphs via the SAW tree reduction), and use the potential method to analyze the amortized behavior of correlation decay.

\item To have an FPTAS, the marginal distribution of a single vertex should be approximable up to certain precision from a local neighborhood of polynomial size. The classic correlation decay is measured with respect to graph distance.
The local neighborhoods in this sense are balls in the graph metric.  A SAW tree enumerates all paths originating from a vertex.
For graphs of unbounded degrees, the SAW tree transformation may have the balls offering desirable precisions explode to super-polynomial sizes.

We introduce the notion of \concept{computationally efficient correlation decay}. Correlation decay is now measured in a refined metric, which has the advantage that a desirable precision is achievable by a ball (in the new metric) of polynomial size even after the SAW tree transformation. We prove an exponential correlation decay in this new metric when the uniqueness is satisfied. As a result, we have an FPTAS for \emph{arbitrary} graphs as long as the uniqueness condition holds.
\end{enumerate}

\subsection{Related works}
The approximation for partition function has been extensively studied with both positive~\cite{JS93,app_JSV04,app_GJ11,app_DJV01,IS_DG00,col_Jerrum95,col_Vigoda99} and negative results~\cite{inapp_GJ10,inapp_BCFKTVV99,inapp_MWW09,inapp_GJ09,inapp_CDF01,inapp_GT04,inapp_BR04,inapp_GR07}. Some special problems in these framework are well studied combinatorial problems, e.g.~counting independent sets~\cite{IS_DFJ02,IS_DG00,IS_LV97} and graph coloring~\cite{MSW07,col_Jerrum95,col_HV03,col_DFFV06,col_DFHV04,col_Hayes03,col_DF03,col_HVV07,col_HV05,col_DGM02,col_Molloy04,col_Vigoda99,col_BDK08,col_GJK10}.
Some dichotomies (or trichotomies) of complexity for approximate counting CSP were also obtained~\cite{dicho_DGJ10,dicho_DGJR10,GGSVY11,Sly10}.
Almost all known approximation counting algorithms are based on random sampling~\cite{samp_JVV86,samp_DGJ04}, usually through the famous Markov Chain Monte Carlo (MCMC) method~\cite{MC_DG99,MC_JA96}.
There are very few deterministic approximation algorithms for any counting problems. Some notable examples include~\cite{BG08,GK07,BGKNT07,GKM10,SVV10}.

In a very recent work \cite{sinclair2011approximation}, Sinclair, Srivastava, and Thurley give an FPTAS using correlation decay for the two-state spin systems on bounded degree graphs. They allow the two-state spin systems to have an external field, and the uniqueness thresholds they used are defined with respect to specific maximum degrees.


\section{Definitions and Statements of Results}\label{section-definition}
A spin system is described by a graph $G=(V,E)$. A \concept{configuration} of the system is one of the $2^{|V|}$ possible
assignments $\sigma: V\rightarrow \{0, 1\}$ of states to vertices. We also use two colors \emph{blue} and \emph{green} to denote these two states.
Let $A=\begin{bmatrix} A_{0,0} & A_{0,1} \\ A_{1,0} & A_{1,1} \end{bmatrix}=\begin{bmatrix} \beta & 1 \\ 1 & \gamma \end{bmatrix}$, where $\beta, \gamma \geq 0$.
The \concept{Gibbs measure} is a distribution over all configurations defined by
\begin{align*}
\mu(\sigma)=\frac{1}{Z_A(G) }\prod_{(u,v) \in E} A_{\sigma(u), \sigma(v)}.
\end{align*}
The normalization factor $Z_A(G)=\sum_{\sigma \in 2^{V}} \prod_{(u,v) \in E} A_{\sigma(u), \sigma(v)}$
is called the \concept{partition function}.

From this distribution, we can define the marginal probability $p_v$ of $v$ to be colored blue. 
Let $\sigma_{\Lambda}$ be a configuration defined on vertices in $\Lambda\subset V$.  We call vertices $v\in\Lambda$ \concept{fixed} vertices, and $v\not\in\Lambda$ \concept{free} vertices. We use $p_v^{\sigma_{\Lambda}}$ to denote the marginal probability of $v$ to be colored blue conditioned on the configuration of $\Lambda$ being fixed as $\sigma_{\Lambda}$.

\begin{definition}\label{definition-correlation-decay}
A spin system on a family of graphs is said to have exponential correlation decay if for any graph $G=(V,E)$ in the family, any $v\in V,\Lambda\subset V$ and $\sigma_{\Lambda},\tau_{\Lambda}\in\{0,1\}^{\Lambda}$,
\begin{align*}
|p_v^{\sigma_{\Lambda}}-p_v^{\tau_{\Lambda}}|\leq \exp(-\Omega(\mathrm{dist}(v,\Delta))).
\end{align*}
 where $\Delta\subset\Lambda$ is the subset on which $\sigma_{\Lambda}$ and $\tau_{\Lambda}$ differ, and $\mathrm{dist}(v,\Delta)$ is the shortest distance from $v$ to any vertex in $\Delta$.
\end{definition}

This definition is equivalent to the ``strong spatial mixing'' in \cite{Weitz06} with an exponential rate. It is stronger than the standard notion of exponential correlation decay in Statistical Physics \cite{Dob70}, where the decay is measured with respect to $\mathrm{dist}(v,\Lambda)$ instead of $\mathrm{dist}(v,\Delta)$.

The marginal probability $p_v^{\sigma_{\Lambda}}$ in a tree can be computed by the following recursion.
Let $T$ be a tree rooted by $v$. We denote $R_T^{\sigma_{\Lambda}}$ as the ratio of the probabilities that root $v$ is blue and green, respectively, when imposing the condition $\sigma_{\Lambda}$. Formally, $R_T^{\sigma_{\Lambda}}=\frac{p_v^{\sigma_{\Lambda}}}{1-p_v^{\sigma_{\Lambda}}}$ (when $p_v^{\sigma_{\Lambda}}=1$, let $R_v^{\sigma_{\Lambda}}=\infty$ by convention). 
Suppose that the root of $T$ has $d$ children. Let $T_i$ be the subtree rooted by the $i$-th child of the root.  The distributions on distinct subtrees are independent.
A calculation then gives that
\begin{equation}
R_T^{\sigma_{\Lambda}} = \prod_{i=1}^d \frac{\beta R_{T_i}^{\sigma_{\Lambda}}+1 }{R_{T_i}^{\sigma_{\Lambda}} +\gamma}. \label{eq:recursion}
\end{equation}
%
%

It is of physical significance to study the Gibbs measures on infinite $(d+1)$-regular trees $\widehat{\mathbb{T}}^d$ \cite{Georgii88}. In $\widehat{\mathbb{T}}^d$, the recursion is of a symmetric form $f(x)=\left(\frac{\beta x + 1}{x + \gamma}\right)^d$. There may be more than one Gibbs measures on infinite graphs. We say that the system has the \concept{uniqueness}  if there is exact one Gibbs measure. Let  $\hat{x}=f(\hat{x})$ be the fixed point of $f(x)$. It is known \cite{kelly1985stochastic,MSW07} that the spin system on $\widehat{\mathbb{T}}^d$ undergoes a phase transition at $|f'(\hat{x})|=1$ with uniqueness when $|f'(\hat{x})|=\frac{d(1-\beta\gamma)(\beta \hat{x}+1)^{d-1}}{(\hat{x}+\gamma)^{d+1}}\le 1$.
This motivates the following definition
\begin{align*}
\Gamma(\beta)&=\inf\left\{\gamma\ge1 \Bigm{|} \forall d\ge 1, \,\frac{d(1-\beta\gamma)(\beta \hat{x}+1)^{d-1}}{(\hat{x}+\gamma)^{d+1}}\le1\right\}.
\end{align*}
For a fixed $0\le\beta<1$, the $\Gamma(\beta)$ gives the boundary that all infinite regular trees $\widehat{\mathbb{T}}^d$ exhibit uniqueness when $\Gamma(\beta)\le\gamma\le\frac{1}{\beta}$. We call $\Gamma(\beta)$ the \concept{uniqueness threshold}\ifabs{\footnote{For technical reasons, we treat $d$ as real numbers. This will make $\Gamma(\beta)$ only slightly greater than the uniqueness threshold $\Gamma^*(\beta)$ realized by integer $d$. For example, when $\beta=0$, $\Gamma(\beta)\approx 1.1101715$ and $\Gamma^*(\beta)\approx1.1101714$}}{}.
\ifabs{In the full version of the paper, we will show that $1<\Gamma(\beta)<\frac{1}{\beta}$ for any $0\le\beta<1$.}{}
Indeed, for any $d\ge 1$, there is a critical $\Gamma_d(\beta)$ such that $\widehat{\mathbb{T}}^d$ exhibits uniqueness when $\Gamma_d(\beta)<\gamma<\frac{1}{\beta}$. Furthermore, there is a finite crucial $D>1$ such that $\Gamma_D(\beta)=\Gamma(\beta)$. That is, $\widehat{\mathbb{T}}^D$ has the highest uniqueness threshold $\Gamma(\beta)$ among all $\widehat{\mathbb{T}}^d$.

\ifabs{}{We remark that for technical reasons, we treat $d$ as real numbers thus $\Gamma(\beta)$ is slightly greater than the one defined by integer $d$s. An integer version of $\Gamma(\beta)$ is given in Section \ref{sec-beta=0}, where a slightly improved and tight analysis is given for the specially case $\beta=0$.
}

\begin{definition}
A fully polynomial-time approximation scheme (FPTAS) for $Z_A(G)$ is an algorithm that given as input an instance $G$ and an $\epsilon>0$, outputs a number $Z$ in time $\mathrm{poly}(|G|, \frac{1}{\epsilon})$ such that $(1-\epsilon)Z_A(G)\leq Z \leq (1+\epsilon)Z_A(G)$.
\end{definition}

In Definition \ref{definition-correlation-decay}, the correlation decay is measured in graph distance.
In order to support an FPTAS for graphs with unbounded degrees, we need to define the following refined metric.
\begin{definition}
Let $T$ be a rooted tree and $M\geq 2$ be a constant.  We define the \concept{$M$-based depth} $L_M(v)$ of a vertex $v$ in $T$ recursively as follows: $L_M(v)=0$ if $v$ is the root of $T$; and for every child $u$ of $v$, if $v$ has $d\ge 1$ children, $L_M(u)=L_M(v) + \lceil \log _M (d+1) \rceil$.
\end{definition}
If every vertex in $T$ has $d<M$ children, $L_M(v)$ is precisely the depth of $v$. If there are vertices having $d\ge M$ children, we actually replace every such vertex and its $d$ children with an $M$-ary tree of depth $\lceil \log _M (d+1) \rceil$, and $L_M(v)$ is the depth of $v$ in this new tree.

\begin{definition}
Let $T$ be a rooted tree and $M\geq 2$ be a constant.  Let $B_M(L)=\{v\in T\mid L_M(v)\le L\}$, called an \concept{$M$-based $L$-ball}, be the set of vertices in $T$ whose $M$-based depths are no greater than $L$; and let $B_M^*(L)$, called an \concept{$M$-based $L$-closed-ball}, be the set of vertices in $B_M(L)$ and all their children in $T$.
\end{definition}


The main technical result of the paper is the following theorem which establishes an exponential correlation decay in the refined metric when the uniqueness condition holds.

\begin{theorem}[Computationally Efficient Correlation Decay]
\label{thm:rank-decay}
Let  $0\le\beta<1$, $\beta\gamma<1$, and $\gamma>\Gamma(\beta)$. There exists a sufficiently large constant $M$ which depends only on $\beta$ and $\gamma$, such that on an arbitrary tree $T$, for any two configurations $\sigma_{\Lambda}$ and $\tau_{\Lambda}$ which differ on $\Delta \subset \Lambda$, if $B_M^*(L) \cap \Delta =\emptyset$ then
\[|R_T^{\sigma_{\Lambda}}-R_T^{\tau_{\Lambda}}| \leq \exp(-\Omega(L)).\]
\end{theorem}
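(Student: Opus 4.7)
The plan is to apply the potential method to the tree recursion $R_T = F(R_{T_1}, \ldots, R_{T_d})$ with $F(x_1,\ldots,x_d) = \prod_{i=1}^d \frac{\beta x_i + 1}{x_i + \gamma}$ from (\ref{eq:recursion}). I would choose a smooth, strictly monotone potential $\varphi\colon(0,\infty)\to\mathbb{R}$ (its exact form to be fixed by the optimization below), change variables by $y_T = \varphi(R_T)$, and work in the transformed coordinate. By the mean value theorem,
\[
\bigl|y_T^{\sigma_\Lambda} - y_T^{\tau_\Lambda}\bigr| \;\le\; \sum_{i=1}^d \kappa_i \cdot \bigl|y_{T_i}^{\sigma_\Lambda} - y_{T_i}^{\tau_\Lambda}\bigr|,
\]
where $\kappa_i = |\varphi'(R_T)|\cdot|\partial F/\partial x_i|/|\varphi'(R_{T_i})|$. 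A direct computation gives $|\partial F/\partial x_i| = F\cdot(1-\beta\gamma)/[(\beta R_{T_i}+1)(R_{T_i}+\gamma)]$, so the contraction is fuelled by the strictly positive factor $1-\beta\gamma$.

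The heart of the proof is an \emph{amortized} contraction inequality tailored to the $M$-based metric: I would show that under $0\le\beta<1$, $\beta\gamma<1$, and $\gamma>\Gamma(\beta)$, there exist $\alpha\in(0,1)$ and $M_0\ge 2$ depending only on $\beta,\gamma$ such that for every $M\ge M_0$, every $d\ge 1$, and every $R_1,\ldots,R_d\ge 0$,
\[
\sum_{i=1}^d \kappa_i \;\le\; \alpha^{\lceil \log_M(d+1)\rceil}.
\]
For $d<M$ the right side is just $\alpha$, which demands a uniform one-step contraction; for larger $d$ the right side is of order $(d+1)^{-c}$ with $c=-\log_M\alpha>0$, allowing $\sum_i\kappa_i$ to grow in $d$, but only sub-polynomially. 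To prove it I would first exploit the product form of $F$ together with an AM-GM-style symmetrization to reduce the worst case over $(R_1,\ldots,R_d)$ to the diagonal configuration $R_1=\cdots=R_d=x$, at which point the inequality becomes a one-variable statement about the symmetric recursion $f(x)=((\beta x+1)/(x+\gamma))^d$ studied in the uniqueness analysis. The strict inequality $\gamma>\Gamma(\beta)$ yields a uniform gap $|f'(\hat{x})|\le 1-\eta$ for some $\eta>0$ independent of $d$, and the potential $\varphi$ is engineered (via a calculus-of-variations style computation over candidate potentials) to absorb this gap and convert it into the desired amortized bound.

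Given the amortized contraction, the theorem follows by induction on $L$. Let $\varepsilon_T(L)$ denote an upper bound on $|y_T^{\sigma_\Lambda}-y_T^{\tau_\Lambda}|$ whenever $B_M^*(L)\cap\Delta=\emptyset$; I claim $\varepsilon_T(L)\le C\alpha^L$. The base case $L=0$ uses the hypothesis $B_M^*(0)\cap\Delta=\emptyset$, which places every disagreement strictly below the children of the root, so one recursion step already applies the contraction and an a-priori universal bound on $|y|$ (coming from the bounded range $g(x)=(\beta x+1)/(x+\gamma)\in(\beta,1/\gamma]$ of each factor of $F$, together with Lipschitz estimates on $\varphi$ in this range) gives $\varepsilon_T(0)\le C$. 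For the inductive step, each child of the root sits at $M$-based depth $\lceil \log_M(d+1)\rceil$, so by induction the $i$-th child satisfies $|y_{T_i}^{\sigma_\Lambda}-y_{T_i}^{\tau_\Lambda}|\le C\alpha^{L-\lceil\log_M(d+1)\rceil}$; plugging this into the one-step bound and applying the amortized contraction yields $\varepsilon_T(L)\le C\alpha^L$. Finally, Lipschitz continuity of $\varphi^{-1}$ on the relevant range converts the $y$-decay back to $|R_T^{\sigma_\Lambda}-R_T^{\tau_\Lambda}|\le\exp(-\Omega(L))$.

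The main obstacle is the amortized contraction inequality and, in particular, the design of $\varphi$. The difficulty is that the inequality must hold uniformly over all degrees $d$ and all tuples $(R_1,\ldots,R_d)$, whereas the uniqueness hypothesis is naturally expressed at the single fixed point $\hat{x}$ per each regular tree. Engineering one potential that flattens the worst case across all $d$ simultaneously, and leveraging the strict slack in $\gamma>\Gamma(\beta)$ to overcome the unbounded-degree amplification with only a sub-polynomial loss, is the technical crux; this is also why strict uniqueness, rather than marginal uniqueness, appears to be essential.
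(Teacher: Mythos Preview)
Your plan matches the paper's argument closely: potential method on the tree recursion, mean-value linearization, symmetrization to the diagonal (the paper uses Jensen after the substitution $y_i=\ln\frac{\beta x_i+1}{x_i+\gamma}$), a one-variable analysis tied to the uniqueness threshold, and induction along $M$-based depth. Two points in your sketch need sharpening before it becomes a proof.

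First, the potential is not left as a free parameter: the paper takes $\varphi'(R)=1/\Phi(R)$ with the specific choice $\Phi(R)=R^{(D+1)/(2D)}(\beta R+1)$, where $D$ is the critical degree realizing $\Gamma(\beta)$. This exponent is exactly what makes the symmetric amortized ratio $\alpha(d,x)=\Phi(x)\,|f'(x)|/\Phi(f(x))$ attain its supremum over all $d\ge1$ and $x\in(0,1]$ precisely at $(D,X)$ with value $1$ when $\gamma=\Gamma(\beta)$; a generic potential will not do this.

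Second, and more substantively, the one-variable step is not ``$|f'(\hat x)|\le 1-\eta$ at the fixed point.'' That inequality, while true under $\gamma>\Gamma(\beta)$, does not control the amortized contraction at arbitrary $x\in(0,1]$; away from $\hat x$ the raw derivative $|f'(x)|$ can exceed $1$, and it is the potential that absorbs this. The actual technical content is $\sup_{d\ge1,\,0<x\le1}\alpha(d,x)=1$ at $\gamma=\Gamma(\beta)$, followed by strict monotonicity of $\alpha(d,x)$ in $\gamma$ to obtain a uniform $\alpha<1$ for $\gamma>\Gamma(\beta)$. Also, for $d\ge M$ the paper does not push this refined bound; it uses the crude estimate $\epsilon_v\le d\,\gamma^{-(d_0+d_1+d)(D-1)/(2D)}\max_i\epsilon_{v_i}$ and picks $M$ so that $k/\gamma^{k(D-1)/(2D)}\le\alpha^{\lceil\log_M k\rceil}$ for $k\ge M$. (Your phrase ``allowing $\sum_i\kappa_i$ to grow in $d$'' is a slip: the bound must, and does, decay in $d$.)
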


The name computationally efficient correlation decay is due to the fact that $|B_M(L)|\le M^L$ in any tree, thus  an exponential decay would imply a polynomial-size $B_M(L)$ giving an inverse-polynomial precision.

Theorem \ref{thm:rank-decay} has the following implications via Weitz's self-avoiding tree construction \cite{Weitz06}.

\begin{theorem}\label{thm:correlation-decay}
Let  $0\le\beta<1$, $\beta\gamma<1$ , $\gamma>\Gamma(\beta)$. It is of exponential correlation decay for the Gibbs measure on any graph.
\end{theorem}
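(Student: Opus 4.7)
The plan is a two-stage reduction: first pass from the general graph $G$ to a tree via Weitz's self-avoiding walk (SAW) construction, then invoke Theorem \ref{thm:rank-decay} on that tree. For any $v\in V$ and any boundary $\sigma_\Lambda$, Weitz's construction produces a rooted tree $T_{\text{SAW}}(v)$ together with induced boundary conditions on certain vertices, such that the root marginal on $T_{\text{SAW}}(v)$ equals $p_v^{\sigma_\Lambda}$ on $G$. The combinatorial property I need, already implicit in \cite{Weitz06}, is that a vertex at regular tree-depth $r$ in $T_{\text{SAW}}(v)$ corresponds to a self-avoiding walk of length $r$ in $G$ starting at $v$; in particular, any induced boundary vertex originating from some $u\in\Lambda$ has regular tree-depth at least $\mathrm{dist}_G(v,u)$.

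Fix two configurations $\sigma_\Lambda,\tau_\Lambda$ differing on $\Delta\subseteq\Lambda$, and let $k=\mathrm{dist}_G(v,\Delta)$. In $T_{\text{SAW}}(v)$ the induced boundary conditions for $\sigma_\Lambda$ and $\tau_\Lambda$ coincide on every vertex of regular depth strictly less than $k$, so they differ only on a set $\Delta'$ of tree vertices of regular depth at least $k$. Since $\lceil\log_M(d+1)\rceil\ge 1$ whenever $d\ge 1$, a trivial induction shows that the $M$-based depth $L_M$ dominates the ordinary depth on any rooted tree; hence every $u\in\Delta'$ satisfies $L_M(u)\ge k$, and the parent of any such $u$ has regular depth at least $k-1$ and therefore $L_M\ge k-1$. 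Choosing $L=k-2$ gives $B_M^*(L)\cap\Delta'=\emptyset$, so Theorem \ref{thm:rank-decay} applied to $T_{\text{SAW}}(v)$ yields $|R_T^{\sigma_\Lambda}-R_T^{\tau_\Lambda}|\le\exp(-\Omega(k))$. Because the map $R\mapsto R/(R+1)$ has derivative $1/(R+1)^2\le 1$ for $R\ge 0$, this translates immediately into $|p_v^{\sigma_\Lambda}-p_v^{\tau_\Lambda}|\le\exp(-\Omega(\mathrm{dist}_G(v,\Delta)))$, which is exactly Definition \ref{definition-correlation-decay}.

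The core content of the argument is entirely in Theorem \ref{thm:rank-decay}; the present theorem is a matter of translating between metrics and quantities. The main subtleties are therefore the bookkeeping that guarantees $L_M\ge$ regular depth uniformly in the SAW tree, and the handling of the corner case $R=\infty$ (which can occur when $\beta=0$ forces a green neighborhood), where the passage from $|R^\sigma-R^\tau|$ to $|p^\sigma-p^\tau|$ must be justified by a short continuity/limiting argument rather than directly by the bound on the derivative of $R\mapsto R/(R+1)$.
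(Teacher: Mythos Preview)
Your proposal is correct and is exactly the approach the paper intends: the paper itself does not write out a proof of this theorem but simply states that it follows from Theorem~\ref{thm:rank-decay} via Weitz's self-avoiding walk tree construction, and you have filled in precisely that bookkeeping (depth in the SAW tree equals walk length in $G$, $M$-based depth dominates ordinary depth, and the $1$-Lipschitz passage from $R$ to $p$). One minor remark: your worry about the corner case $R=\infty$ is unnecessary here, since for free $v$ and $0\le\beta<1<\gamma$ every factor $(\beta R_i+1)/(R_i+\gamma)$ lies in $(0,1)$, so $R_T^{\sigma_\Lambda}\in(0,1)$ and the derivative bound applies directly.
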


\begin{theorem}\label{thm:FPTAS}
Let  $0\le\beta<1$, $\beta\gamma<1$ , $\gamma>\Gamma(\beta)$. There is an FPTAS for computing the partition function $Z_A(G)$ for arbitrary graph $G$.
\end{theorem}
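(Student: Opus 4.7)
The plan is to combine the correlation-decay result of Theorem~\ref{thm:rank-decay} with Weitz's self-avoiding walk (SAW) tree construction and the standard self-reduction from partition functions to marginal probabilities. First I fix an arbitrary ordering $v_1,\dots,v_n$ of the vertices and a reference configuration $\sigma^*$, and write
\[
Z_A(G)=\frac{w(\sigma^*)}{\prod_{i=1}^n p_i},
\]
where $w(\sigma^*)=\prod_{(u,v)\in E}A_{\sigma^*(u),\sigma^*(v)}$ and $p_i$ is the conditional marginal probability that $v_i$ takes the state $\sigma^*(v_i)$ in the system with $v_1,\dots,v_{i-1}$ fixed to $\sigma^*$. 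So to obtain a $(1\pm\epsilon)$ approximation of $Z_A(G)$ it suffices to approximate each $p_i$ within a multiplicative factor $1\pm\epsilon/(2n)$.

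To compute each $p_i$, I invoke Weitz's construction \cite{Weitz06}: the marginal at $v_i$ (in the partially fixed graph) equals the marginal at the root of a suitable self-avoiding walk tree $T_{\mathrm{SAW}}$ in which the ``terminating'' vertices carry boundary conditions. Working with the ratio $R_{T_{\mathrm{SAW}}}=p/(1-p)$ rather than with $p$ itself, I evaluate the recursion \eqref{eq:recursion} on $T_{\mathrm{SAW}}$, but \emph{truncate} it by replacing the subtree rooted at every vertex $u$ with $L_M(u)>L$ by an arbitrary fixed boundary. Let $\widehat{R}$ be the truncated value and $R$ the true value; since the two corresponding boundary configurations differ only on vertices with $M$-based depth $>L$, Theorem~\ref{thm:rank-decay} yields
\[
|R-\widehat{R}|\le\exp(-\Omega(L)).
\]
Choosing $L=C\log(n/\epsilon)$ for a sufficiently large constant $C=C(\beta,\gamma)$ makes this error at most $\epsilon/\mathrm{poly}(n)$, which translates to the desired $(1\pm\epsilon/(2n))$ multiplicative accuracy on $p_i$ once one checks that $R$ (and $1-p_i$) stays in a bounded, non-degenerate range determined by $\beta,\gamma$.

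The key efficiency point is that the truncated recursion only inspects vertices in the $M$-based closed ball $B_M^*(L)$ of $T_{\mathrm{SAW}}$. By the definition of the $M$-based depth, $|B_M(L)|\le M^L$ and hence $|B_M^*(L)|\le M^{L+1}\le (n/\epsilon)^{C'}$ for some constant $C'=C'(\beta,\gamma)$; the SAW tree itself need not be materialized globally, only explored up to that depth. Each recursive step is a constant-time arithmetic operation, so the total running time is $\mathrm{poly}(n,1/\epsilon)$. Doing this for $i=1,\dots,n$ and multiplying the approximations through the self-reduction yields the claimed FPTAS.

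The main technical obstacle, beyond simply chaining together known pieces, is controlling the error propagation in the self-reduction: one has to ensure that each conditional marginal $p_i$ is bounded away from $0$ so that a small absolute error on $R$ becomes a small relative error on $p_i$, and one has to verify that Weitz's construction is compatible with the $M$-based metric (a vertex of degree $d$ in the graph contributes at most $\lceil\log_M(d+1)\rceil$ to the $M$-based depth exactly as in the tree, so unbounded-degree graphs pose no problem). Both points are routine once the computationally efficient correlation decay is in hand, which is precisely what Theorem~\ref{thm:rank-decay} provides.
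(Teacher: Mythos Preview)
Your proposal is correct and follows the same route as the paper (Section~\ref{section-algorithms}): Weitz's SAW tree, truncation of the recursion at $M$-based depth $L$, invocation of Theorem~\ref{thm:rank-decay}, and the standard telescoping self-reduction. Two small points to tighten: first, the bound $|B_M^*(L)|\le M^{L+1}$ is not correct, since a vertex at $M$-based depth exactly $L$ may have arbitrarily many children outside $B_M(L)$; the paper instead uses the degree bound of the SAW tree to get $|B_M^*(L)|\le n\cdot M^L$, which is still polynomial. Second, to guarantee that each $p_i$ is bounded away from $0$ you cannot take $\sigma^*$ arbitrary---the paper takes $\sigma^*$ to be the all-green configuration, so that $p_i=1-p_{v_i}^{\sigma_{i-1}}\ge\tfrac12$ because $R\le 1$ whenever $\beta<1<\gamma$.
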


By symmetry, in Theorem \ref{thm:rank-decay}, \ref{thm:correlation-decay}, and \ref{thm:FPTAS}, the roles of $\beta$ and $\gamma$  can be switched.

In the Section \ref{section-algorithms}, we will show the FPTAS implied by Theorem \ref{thm:rank-decay}, followed by \ifabs{}{a formal treatment of the uniqueness threshold in Section \ref{app-Gamma-D}, and finally }the formal proof of Theorem \ref{thm:rank-decay} in Section \ref{section-analysis}.

\section{An FPTAS for the Partition Function}\label{section-algorithms}
Assuming that Theorem \ref{thm:rank-decay} is true, we show that when $0\le\beta<1$ and $\Gamma(\beta)<\gamma<\frac{1}{\beta}$, there is an FPTAS for the partition function $Z_A(G)$ for arbitrary graph $G$. The FPTAS is based on approximation of $R_{G,v}^{\sigma_\Lambda}=p_v^{\sigma_\Lambda}/(1-p_v^{\sigma_\Lambda})$, the ratio between the probabilities that $v$ is blue and green, respectively, when imposing the condition $\sigma_{\Lambda}$.

The self-avoiding walk tree is introduced by Weitz in~\cite{Weitz06} for calculating $R_{G,v}^{\sigma_\Lambda}$.
Given a graph $G=(V,E)$, we fix an arbitrary order $<$ of vertices. Originating from any vertex $v\in V$, a self-avoiding walk tree,  denoted $T_{\mathrm{SAW}}(G,v)$, is constructed as follows. Every vertex in $T_{\mathrm{SAW}}(G,v)$ corresponds to one of the walks $v_1\rightarrow v_2\rightarrow\cdots\rightarrow v_k$ in $G$ such that $v_1=v$, all edges are distinct and $v_1,\ldots, v_{k-1}$ are distinct, i.e.~the self-avoiding walks originating from $v$ and those appended with a vertex closing a cycle. The root of $T_{\mathrm{SAW}}(G,v)$ corresponds to the trivial walk $v$. The vertex $v_1$ parents $v_2$ in $T_{\mathrm{SAW}}(G,v)$, if and only if their respective walks $w_1$ and $w_2$ satisfy that $w_2=w_1\rightarrow u$ for some $u$. For a leaf of $T_{\mathrm{SAW}}(G,v)$ whose walk closes a cycle, supposed that the cycle is $u\rightarrow v_1\rightarrow\cdots v_k\rightarrow u$, fix the leaf to be blue if $v_1> v_k$ and green otherwise.
When a configuration $\sigma_{\Lambda}$ is imposed on $\Lambda\subset V$ of the original graph $G$, for any vertex of $T_{\mathrm{SAW}}(G,v)$ whose corresponding walk ends at a $u\in\Lambda$, the color of the vertex is fixed to be $\sigma_\Lambda(u)$. We abuse the notation and denote the resulting configuration on $T_{\mathrm{SAW}}(G,v)$ by $\sigma_\Lambda$ as well.

This novel tree construction has the advantage that the probabilities are exactly the same in both the original spin system and the constructed tree.
\begin{theorem}[Weitz~\cite{Weitz06}]\label{theorem-T-SAW}
Let $T=T_{\mathrm{SAW}}(G,v)$.
It holds that $R^{\sigma_\Lambda}_{G,v}=R^{\sigma_\Lambda}_{T}$ 
\end{theorem}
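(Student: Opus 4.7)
My plan is to prove the identity by induction on the cyclomatic number $|E|-|V|+c(G)$ of $G$, reducing each cycle with a local graph transformation that preserves the marginal ratio at $v$. For the base case, when $G$ is a forest, $T_{\mathrm{SAW}}(G,v)$ coincides with the component of $G$ containing $v$, rooted at $v$, and the equality follows immediately from the tree recursion (\ref{eq:recursion}), which is a direct consequence of the conditional independence of distinct subtrees given the root's spin.

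For the inductive step, I would pick an edge $e=(x,y)$ lying on some cycle of $G$ and \emph{split} the endpoint $y$ into two copies $y_1,y_2$: $y_1$ inherits $e$, while $y_2$ inherits the remaining edges incident to $y$. The resulting graph $G'$ has strictly smaller cyclomatic number, so the induction hypothesis applies to it. However, the marginals at $v$ in $G$ and $G'$ do not agree in general, because in $G'$ the two copies may take opposite spins whereas in $G$ they are identified. The key local claim is that this discrepancy can be corrected exactly by \emph{pinning} one of the copies to a specific spin determined by the ordering $<$ of vertices in $G$. Unfolding this reduction iteratively for every cycle produces precisely $T_{\mathrm{SAW}}(G,v)$ with the stated cycle-closing convention (the leaf blue when $v_1>v_k$ and green otherwise).

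The main obstacle is establishing the correctness of the pinning rule at each split --- verifying that this asymmetric fixing gives an \emph{exact} identity of marginal ratios, not merely an approximation. To prove it, I would decompose $Z_A(G)$ into the contributions from configurations assigning $y$ the color blue and from those assigning it green, and show that each contribution transforms under the split into a conditional partition function of $G'$ with a specific pinning of $y_1$. The ordering-based convention ensures that when a given cycle is unfolded from its two endpoints, the ``spurious'' contributions in which the two copies disagree cancel between the two unfoldings, leaving exactly the original sum. I expect this cancellation --- essentially a bijection between weighted configurations of $G$ and weighted configurations of the pinned $G'$ --- to be the delicate part, and to rely crucially on the symmetry $A_{0,1}=A_{1,0}=1$ of the interaction matrix. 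Once this local identity is established, iterating it along all cycles of $G$ yields the theorem.
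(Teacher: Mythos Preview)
The paper does not supply a proof of this statement; it is quoted from Weitz~\cite{Weitz06} and used as a black box, so there is no in-paper argument to compare against.

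Your inductive scheme has a real gap: the local claim that splitting an arbitrary cycle vertex $y$ into $y_1,y_2$ and pinning one copy to a single fixed spin preserves $R_{G,v}$ exactly is false. On the triangle $\{v,a,b\}$ with $\beta=\tfrac12$, $\gamma=\tfrac32$, splitting $b$ into $b_1\sim a$ and $b_2\sim v$ gives $R_{G,v}=21/55$, whereas the four possible pinnings yield $R_{G',v}^{b_1=0}=3/8$, $R_{G',v}^{b_1=1}=24/65$, $R_{G',v}^{b_2=0}=13/42$, $R_{G',v}^{b_2=1}=26/63$, none of which equals $21/55$. So a single split-and-pin cannot serve as the inductive step; the ``cancellation between the two unfoldings'' you invoke would have to be a global phenomenon rather than a local identity, and then the induction on cyclomatic number no longer goes through as stated. (The identity also does not depend on $A_{0,1}=A_{1,0}=1$; it holds for any two-spin interaction.)

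Weitz's actual argument splits not an arbitrary cycle vertex but the vertex $v$ itself. If $v$ has neighbors $u_1,\dots,u_d$, replace $v$ by degree-one copies $v_1,\dots,v_d$ with $v_i\sim u_i$, obtaining $G'$. A weight-preserving bijection gives $R_{G,v}=\Pr_{G'}[v_1=\cdots=v_d=0]/\Pr_{G'}[v_1=\cdots=v_d=1]$, and a telescoping product turns this into $\prod_{i} R^{\tau_i}_{G',v_i}$, where $\tau_i$ fixes $v_1,\dots,v_{i-1}$ to one color and $v_{i+1},\dots,v_d$ to the other. Since each $v_i$ is a leaf, each factor equals $(\beta R_i+1)/(R_i+\gamma)$ with $R_i$ the ratio at $u_i$ in the smaller graph $G'\setminus v_i$ under $\tau_i$; recursing reproduces recursion~(\ref{eq:recursion}) on $T_{\mathrm{SAW}}(G,v)$, and the asymmetric conditioning $\tau_i$ is precisely what generates the ordering-based coloring of cycle-closing leaves. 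The essential point you are missing is that the copies being pinned are always copies of the \emph{current root}, not of a remote vertex on a cycle.
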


Due to \eqref{eq:recursion}, in a tree $T$, the following recursion holds for $R_T^{\sigma_{\Lambda}}$:
\begin{align*}
R_T^{\sigma_{\Lambda}} 
&= \prod_{i=1}^d \frac{\beta R_{T_i}^{\sigma_\Lambda}+1 }{R_{T_i}^{\sigma_\Lambda} +\gamma}.
\end{align*}
The base case is either when the current root $v\in \Lambda$, i.e.~$v$'s color is fixed, in which case $R_T^{\sigma_{\Lambda}} = \infty$  or $R_T^{\sigma_{\Lambda}} =0$
(depending on whether $v$ is fixed to be blue or green), or when $v$ is free and has no children, in which case $R_T^{\sigma_{\Lambda}}=1$ (this is consistent with the recursion since the outcome of an empty product is 1 by convention).

For $\beta\gamma<1$, the recursion is monotonically decreasing with respect to every $R_{T_i}^{\sigma_\Lambda}$.
An upper (lower) bound of $R_T^{\sigma_{\Lambda}}$ can be computed by replacing $R_{T_i}^{\sigma_\Lambda}$ in the recursion by their respective lower (upper) bounds.
Algorithm \ref{alg:R-bounds} computes the lower or upper bound of $R_T^{\sigma_\Lambda}$ up to vertices in  $M$-based $L$-closed ball $B_M^*(L)$. For the vertices outside $B_M^*(L)$, it uses the trivial bounds $0\le R\le \infty$.

\begin{algorithm}\label{alg:R-bounds}
\caption{Estimate $R_T^{\sigma_\Lambda}$ based on $B_M^*(L)$}
$R(T_v, \sigma_\Lambda,L,d_{\mathrm{parent}},lb)$:\\
\KwIn{Rooted tree $T_v$; configuration $\sigma_\Lambda$; $M$-based depth $L$; parent degree $d_{\mathrm{parent}}$; Boolean indicator $lb$ of lower bound.}
\KwOut{Lower (or upper) bound of $R_T^{\sigma_\Lambda}$ computed from vertices in $B_M^*(L)$.}
\Begin{
Suppose root $v$ has $d$ children and let $T_i$ be the subtree rooted by the $i$-th child\;
\uIf{$v\in\Lambda$}{
\lIf{$\sigma_\Lambda(v)=\blue$}{\Return $\infty$\;}
\lElse{ \Return $0$\;}
}\uElseIf{$L< 0$}{
\lIf{$lb=\true$}{\Return $0$\;}
\lElse{ \Return $\infty$\;}
}\Else{
$L'\longleftarrow L-\lceil\log_M(d_{\mathrm{parent}}+1)\rceil$\;
\Return $\prod_{i=1}^d\frac{\beta R(T_i,\sigma_\Lambda,L',d,\neg lb)+1}{R(T_i,\sigma_\Lambda,L',d,\neg lb)+\gamma}$\;
}
}
\end{algorithm}

Due to the monotonicity of the recursion, it holds that
\[R(T,\sigma_\Lambda,L,0,\true)\le R_T^{\sigma_\Lambda}\le R(T,\sigma_\Lambda,L,0,\false).
\]
Note that the naive lower bound $0$ (or the upper bound $\infty$) of $R$ for a vertex outside $B_M^*(L)$ can be achieved by fixing the vertex to be green (or blue). Denote by $\tau_0$ and $\tau_1$ the configurations achieving the lower and upper bounds respectively. 
It is easy to see that $\tau_0=\tau_1=\sigma_\Lambda$ in $B_M^*(L)$. Then due to Theorem \ref{thm:rank-decay}, there is a constant $\alpha<1$ such that 
\begin{align*}
&\quad\,\, |R(T,\sigma_\Lambda,L,0,\false)-R(T,\sigma_\Lambda,L,0,\true)|\\
&=|R_T^{\tau_1}-R_T^{\tau_0}|\\
&=O(\alpha^L).
\end{align*}

To compute $R_{G,v}^{\sigma_{\Lambda}}$ for an arbitrary graph $G$, we first construct the $B_M^*(L)$ of $T=T_{\mathrm{SAW}}(G,v)$, and run Algorithm \ref{alg:R-bounds}. Due to Theorem \ref{theorem-T-SAW}, $R_{G,v}^{\sigma_{\Lambda}}=R_T^{\sigma_\Lambda}$, thus it returns $R_0$ and $R_1$ such that $R_0\le R_{G,v}^{\sigma_{\Lambda}}\le R_1$ and $R_1-R_0= O(\alpha^L)$. Since $p_v^{\sigma_\Lambda}=R_{G,v}^{\sigma_{\Lambda}}/(1+R_{G,v}^{\sigma_{\Lambda}})$, we can output $p_0=\frac{R_0}{R_0+1}$ and $p_1=\frac{R_1}{R_1+1}$ so that $p_0\le p_v^{\sigma_\Lambda}\le p_1$ and $p_1-p_0=\frac{R_1}{R_1+1}-\frac{R_0}{R_0+1}\le R_1-R_0=O(\alpha^L)$.

The running time of this algorithm relies on the size of $B_M^*(L)$ in $T_{\mathrm{SAW}}(G,v)$.
The maximum degree of $T_{\mathrm{SAW}}(G,v)$ is bounded by the maximum degree of $G$, which is trivially bounded by $n$, thus $|B_M^*(L)|\le n|B_M(L)|\le nM^L$. The running time of the algorithm is $O(|B_M^*(L)|)=O(nM^L)$.

By setting $L=\log_\alpha\epsilon$, we can approximate $1-p_{v}^{\sigma_\Lambda}$ within absolute error $O(\epsilon)$ in time $O(n \cdot \mathrm{poly}(\frac{1}{\epsilon}))$.
For $\beta<1<\gamma$, it holds that  $0<R_{G,v}^{\sigma_{\Lambda}}<1$  for free $v$ thus $1-p_{v}^{\sigma_\Lambda}>\frac{1}{2}$,  therefore the above procedure approximates $(1-p_{v}^{\sigma_\Lambda})$ within factor $(1\pm O(\epsilon))$. We have an FPTAS for $(1-p_{v}^{\sigma_\Lambda})$.

The partition function $Z_A(G)$ can be computed from $p_v^{\sigma_\Lambda}$ by the following standard routine.  Let $v_1,\ldots,v_n$ enumerate the vertices in $G$, and let $\sigma_i, i=0,1,\ldots,n$, be the configurations fixing the first $i$ vertices $v_1,\ldots,v_{i}$ to be green, where $\sigma_0$ means all vertices are free. The probability measure of $\sigma_{n}$ (all green) can be computed as
\begin{align*}
\mu(\sigma_{n})
&=\prod_{i=1}^n \Pr[v_i\text{ is green }\mid v_1,\ldots,v_{i-1}\text{ are green}]\\
&=\prod_{i=1}^n(1-p_{v_i}^{\sigma_{i-1}}).
\end{align*}
On the other hand,  it is easy to see that $\mu(\sigma_{n})=\frac{\gamma^{|E|}}{Z_A(G)}$ by definition of $\mu$.
Thus
\begin{align*}
Z_A(G)&=\frac{\gamma^{|E|}}{\mu(\sigma_{n})}=\frac{\gamma^{|E|}}{\prod_{i=1}^n(1-p_{v_i}^{\sigma_{i-1}})}.
\end{align*}
Notice that $\gamma^{|E|}>1$. Therefore, an FPTAS for $(1-p_v^{\sigma_\Lambda})$ implies an FPTAS for $Z_A(G)$.

\begin{suppress}
\begin{algorithm}
\caption{Construct the SAW tree $T_{saw}(G,v)$}
\KwIn{Graph $G=(V,E)$; an ordering $\prec_u$ on the neighboring edges of each vertex $u\in V$; an originating vertex $v\in V$.}
\KwOut{$T_{saw}(G,v)$}
\Begin{
    start from $v$, form SAW tree $\widetilde{T_{saw}}(G,v)$ and include all the vertices closing a cycle as the tree leaves\;
    \ForEach {$ul\in Leaf(\widetilde{T_{saw}}(G,v))$ closing a cycle $u\rightarrow u_1\rightarrow...\rightarrow u_m\rightarrow u$ in $G$}{
        \lIf{$(u,u_1)\prec_u (u,u_m)$}{set $ul$ to be blue\;}
        \lElse {set $ul$ to be green\;}
    }

}
\end{algorithm}
\end{suppress} 

\ifabs{}{
\section{The uniqueness threshold}\label{app-Gamma-D}
In this section, we formally define the uniqueness threshold $\Gamma(\beta)$ and the critical $D$. We also prove several propositions regarding these quantities which are useful for the analysis of the correlation decay.
\begin{definition}
Let $0\le\beta<1$ be a fixed parameter. Suppose that $1\le\gamma<\frac{1}{\beta}$ and $d\ge1$. Let $x(\gamma,d)$ be the  positive solution of
\begin{align}
x &=\left(\frac{\beta x+1}{x+\gamma}\right)^d. \label{eq:fix-point-1}
\end{align}
\end{definition}
Define that $f(x)=\left(\frac{\beta x+1}{x+\gamma}\right)^d$. Then $x(\gamma,d)$ is the positive fixed point of $f(x)$. For $\gamma<\frac{1}{\beta}$, $f(x)=\left(\beta+\frac{1-\beta\gamma}{x+\gamma}\right)^d$ is continuous and strictly decreasing over $x\in[0,\infty)$, and it holds that $f(0)=\frac{1}{\gamma^d}>0$ and $f(1)=\left(\frac{1+\beta}{1+\gamma}\right)^d<\frac{1}{\gamma^d}\le1$, thus $f(x)$ has a unique fixed point over $x\in(0,1)$. Therefore, for  $1\le\gamma<\frac{1}{\beta}$ and $d\ge1$, $x(\gamma,d)$ is well defined and $x(\gamma,d)\in(0,1)$.

\begin{definition}
Let
\begin{align*}
\Gamma(\beta)&=\inf\left\{\gamma\ge1 \Bigm{|} \forall d\ge 1, \,\frac{d(1-\beta\gamma)(\beta x(\gamma,d)+1)^{d-1}}{(x(\gamma,d)+\gamma)^{d+1}}\le1\right\}.
\end{align*}
We write $\Gamma=\Gamma(\beta)$ for short if no ambiguity is caused.
\end{definition}

Note that $\Gamma$ can be equivalently defined as
\begin{align*}
\Gamma &=\inf\left\{\gamma\ge1 \Bigm{|} \forall d\ge 1, \,\frac{d(1-\beta\gamma)x(\gamma,d)}{\left(\beta x(\gamma,d)+1\right)\left(x(\gamma,d)+\gamma\right)}\le1\right\},
\end{align*}
because $x(\gamma,d)$ satisfies \eqref{eq:fix-point-1}.

The following lemma states that for $0\le\beta<1$, $\Gamma(\beta)$ is well-defined and nontrivial.
\begin{lemma}\label{lemma-well-define-gamma}
For $0\le\beta<1$, it holds that $1<\Gamma(\beta)<\frac{1}{\beta}$.
\end{lemma}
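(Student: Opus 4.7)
The plan is to prove the two inequalities separately. The upper bound $\Gamma(\beta)<1/\beta$ follows from a uniform-in-$d$ estimate on the derivative that is forced to zero by the prefactor $1-\beta\gamma$. The lower bound $\Gamma(\beta)>1$ follows from a single choice of $d$ where the derivative diverges at $\gamma=1$, combined with continuity in $\gamma$.

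For the upper bound, I will work with the equivalent form of the derivative given in the excerpt,
\[g(\gamma,d) := \frac{d(1-\beta\gamma)\,x(\gamma,d)}{(\beta x(\gamma,d)+1)(x(\gamma,d)+\gamma)}.\]
Since $\beta\gamma<1$, the map $x\mapsto \frac{\beta x+1}{x+\gamma}$ is strictly decreasing (its derivative is $\frac{\beta\gamma-1}{(x+\gamma)^2}<0$) and equals $1/\gamma$ at $x=0$; a short calculation reduces $f(1/\gamma^d)<1/\gamma^d$ to $\beta\gamma<1$, so the unique fixed point satisfies $x(\gamma,d)<\gamma^{-d}$. Using $\beta x(\gamma,d)+1\ge 1$ and $x(\gamma,d)+\gamma\ge\gamma$ gives
\[g(\gamma,d)\;\le\;\frac{d(1-\beta\gamma)}{\gamma^{d+1}}.\]
Elementary calculus bounds $\sup_{d\ge 1}d\gamma^{-d-1}$ by a quantity that stays bounded as $\gamma$ varies on any compact subset of $(1,1/\beta]$ (it is either $\frac{1}{e\gamma\ln\gamma}$, attained at the real $d=1/\ln\gamma$, or $\gamma^{-2}$, attained at $d=1$). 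Since $(1-\beta\gamma)\to 0$ as $\gamma\to(1/\beta)^-$, this shows $\sup_{d\ge 1}g(\gamma,d)\to 0$. Picking $\gamma_0<1/\beta$ with $\sup_d g(\gamma_0,d)\le 1$ yields $\Gamma(\beta)\le\gamma_0<1/\beta$.

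For the lower bound, I will show $g(1,d)\to\infty$ as $d\to\infty$, so some $d^*\ge 1$ has $g(1,d^*)>1$. At $\gamma=1$ the map $x\mapsto \frac{\beta x+1}{x+1}$ equals $1$ at $x=0$ and is strictly less than $1$ for $x>0$ (since $\beta<1$), so $x(1,d)\in(0,1)$ and $x(1,d)\to 0$ as $d\to\infty$. Taking logs of the fixed-point equation and using $\ln\frac{\beta x+1}{x+1}=-(1-\beta)x+O(x^2)$ for small $x$ gives $\ln x(1,d)=-d(1-\beta)x(1,d)+O(d\,x(1,d)^2)$. Since $\ln x(1,d)\to -\infty$, the dominant balance forces $d(1-\beta)x(1,d)\sim\ln(1/x(1,d))\to\infty$. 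Because $(\beta x(1,d)+1)(x(1,d)+1)\to 1$, this gives $g(1,d)\sim d(1-\beta)x(1,d)\to\infty$. Fix $d^*$ with $g(1,d^*)>1$; the implicit function theorem (the map $x-f(x)$ has derivative $1-f'>0$ at the fixed point since $f$ is decreasing) makes $x(\gamma,d^*)$, and hence $g(\gamma,d^*)$, continuous in $\gamma$, so $g(\gamma,d^*)>1$ on a right neighborhood $[1,1+\epsilon)$, giving $\Gamma(\beta)\ge 1+\epsilon>1$.

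The main obstacle is the uniformity in $d$ for the upper bound. The estimate $x(\gamma,d)\le\gamma^{-d}$ is the key ingredient: it pits the linearly growing $d$ against a geometric decay, so the prefactor $1-\beta\gamma$ is free to push the whole supremum to zero. Apart from this, the proof is routine continuity and asymptotic analysis.
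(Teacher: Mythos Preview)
Your argument is essentially correct and follows the same broad strategy as the paper --- show the defining condition fails near $\gamma=1$ for some $d$, and holds for all $d$ when $\gamma$ is close enough to $1/\beta$ --- but the executions differ in useful ways.

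For the upper bound there is one small gap: when $\beta=0$ the prefactor $1-\beta\gamma$ is identically $1$, so the sentence ``Since $(1-\beta\gamma)\to 0$ as $\gamma\to(1/\beta)^-$'' does not apply. The fix is immediate: for $\beta=0$ the other factor $\sup_{d\ge1} d\gamma^{-d-1}\le \frac{1}{e\gamma\ln\gamma}$ itself tends to $0$ as $\gamma\to\infty$, so one can still choose a finite $\gamma_0$ with $\sup_d g(\gamma_0,d)\le 1$. The paper in fact treats $\beta=0$ and $\beta>0$ separately; in the latter case it does not use your clean bound $x(\gamma,d)\le\gamma^{-d}$ but instead picks $\alpha\in(\beta,1)$, restricts $\gamma$ to $[\frac{1-(\alpha-\beta)}{\beta},\frac{1}{\beta})$, shows $x\le\alpha^d$, and bounds $g$ by $\frac{1-\beta\gamma}{-e\ln\alpha}$. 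Your unified estimate is tidier once the $\beta=0$ case is patched.

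For the lower bound, your asymptotic argument that $g(1,d)\sim d(1-\beta)\,x(1,d)\to\infty$ is genuinely different from, and more streamlined than, the paper's proof. The paper argues by contradiction: assuming $g(\gamma,d)\le 1$ for all $d$ with $\gamma\le 1$, it rewrites this as $d(1-\beta\gamma)\le \beta x+\gamma/x+2$, splits into cases $x\ge\gamma$ and $x<\gamma$, and in each case derives incompatible two-sided bounds on $x$ for sufficiently large $d$. Your route --- Taylor-expand $\ln\frac{\beta x+1}{x+1}$ at $x=0$ to get $d(1-\beta)x(1,d)\sim -\ln x(1,d)\to\infty$, then invoke continuity of $g(\cdot,d^*)$ via the implicit function theorem --- reaches the same conclusion more conceptually and also makes explicit the continuity step needed to pass from ``$g(1,d^*)>1$'' to ``$\Gamma>1$'', which the paper leaves implicit.
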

\begin{proof}
We first show that $\Gamma>1$. It is sufficient to show that if $\gamma\le 1$ then there exists a $d$ such that $\frac{d(1-\beta\gamma)x}{(\beta x+1)(x+\gamma)}>1$, where $x$ satisfies that $x=\left(\frac{\beta x+1}{x+\gamma}\right)^d$.

By contradiction, suppose that $\gamma\le 1$ and  for all $d\ge 1$,  $\frac{d(1-\beta\gamma)x}{(\beta x+1)(x+\gamma)}\le1$ where  $x$ satisfies that $x=\left(\frac{\beta x+1}{x+\gamma}\right)^d$. Then,
\begin{align*}
1\ge \frac{d(1-\beta\gamma)x}{(\beta x+1)(x+\gamma)} = \frac{d(1-\beta\gamma)}{\beta x+\frac{\gamma}{x}+(1+\beta\gamma)}\ge  \frac{d(1-\beta\gamma)}{\beta x+\frac{\gamma}{x}+2}.
\end{align*}

Specifically, suppose that $d$ is sufficiently large so the followings hold
\begin{align*}
\beta^d\exp\left(\frac{d}{(1-\beta\gamma)d-3}\right)
&<\frac{d(1-\beta\gamma)-3}{\beta}, \text{ and}\\
\exp\left(-\frac{\gamma d}{d(1-\beta\gamma)-3}\right)
&>\frac{\gamma}{d(1-\beta\gamma)-3}.
\end{align*}

\begin{list}{}{}
\item[Case.1:] $x\ge\gamma$. Then $\frac{\gamma}{x}\le 1$. Thus,
\begin{align*}
1\ge  \frac{d(1-\beta\gamma)}{\beta x+\frac{\gamma}{x}+2} \ge \frac{d(1-\beta\gamma)}{\beta x+3},
\end{align*}
which implies that $x\ge \frac{d(1-\beta\gamma)-3}{\beta}$. On the other hand,
\begin{align*}
x &=\left(\frac{\beta x+1}{x+\gamma}\right)^d
\le\left(\frac{\beta x+1}{x}\right)^d
\le\left(\beta+\frac{\beta}{d(1-\beta\gamma)-3}\right)^d\\
&\le\beta^d\exp\left(\frac{d}{(1-\beta\gamma)d-3}\right)
<\frac{d(1-\beta\gamma)-3}{\beta},
\end{align*}
a contradiction.

\item[Case.2:] $x<\gamma$. Then $\beta x\le\beta\gamma<1$. Thus,
\begin{align*}
1\ge  \frac{d(1-\beta\gamma)}{\beta x+\frac{\gamma}{x}+2} \ge \frac{d(1-\beta\gamma)}{\frac{\gamma}{x}+3},
\end{align*}
which implies that $x\le \frac{\gamma}{d(1-\beta\gamma)-3}$. On the other hand,
\begin{align*}
x &=\left(\frac{\beta x+1}{x+\gamma}\right)^d
\ge\frac{1}{(x+1)^d}
\ge\left(1+\frac{\gamma}{d(1-\beta\gamma)-3}\right)^{-d}\\
&\ge\exp\left(-\frac{\gamma d}{d(1-\beta\gamma)-3}\right)
>\frac{\gamma}{d(1-\beta\gamma)-3},
\end{align*}
a contradiction.
\end{list}

We proceed to show that $\Gamma<\frac{1}{\beta}$. It is sufficient to show that there exists a $1<\gamma<\frac{1}{\beta}$ such that for all $d\ge 1$, $\frac{d(1-\beta\gamma)x}{(\beta x+1)(x+\gamma)}\le1$, where $x$ satisfies that $x=\left(\frac{\beta x+1}{x+\gamma}\right)^d$. 

If $\beta=0$, then $x=\left(\frac{1}{x+\gamma}\right)^d\le \frac{1}{\gamma^d}$. Thus,
\begin{align*}
\frac{d(1-\beta\gamma)x}{(\beta x+1)(x+\gamma)}= \frac{dx}{x+\gamma}\le dx \le \frac{d}{\gamma^d} \le\frac{1}{e\ln\gamma},
\end{align*}
where the last inequality can be verified by taking the maximum of $\frac{d}{\gamma^d}$ over $d$. Therefore,  setting $\gamma=e^{\frac{1}{e}}$, it holds that $\frac{d(1-\beta\gamma)x}{(\beta x+1)(x+\gamma)}\le \frac{1}{e\ln\gamma} = 1$.

On the other hand, if $0<\beta<1$,  choosing an arbitrary constant $\alpha\in(\exp(-\frac{1-\beta}{e}),1)$ which also satisfies that $\alpha\in(\beta,1)$,
and assuming $\gamma\in\left[\frac{1-(\alpha-\beta)}{\beta},\frac{1}{\beta}\right)\subseteq(1,\frac{1}{\beta})$, we have
\begin{align*}
x=\left(\frac{\beta x+1}{x+\gamma}\right)^d = \left(\beta+\frac{1-\beta\gamma}{x+\gamma}\right)^d\le \left(\beta+1-\beta\gamma\right)^d\le \alpha^d.
\end{align*}
Thus,
\begin{align*}
\frac{d(1-\beta\gamma)x}{(\beta x+1)(x+\gamma)}\le d(1-\beta\gamma)x \le  (1-\beta\gamma)d\alpha^d
\le\frac{(1-\beta\gamma)}{-e\ln\alpha},
\end{align*}
where the last inequality is also proved by taking the maximum of $d\alpha^d$. Therefore, we can choose $\gamma=\max\left\{\frac{1-(\alpha-\beta)}{\beta},\frac{1}{\beta}-\frac{e\ln(1/\alpha)}{\beta}\right\}$, which indeed satisifes $\gamma\in(1,\frac{1}{\beta})$, to guarantee that $\frac{d(1-\beta\gamma)x}{(\beta x+1)(x+\gamma)}\le \frac{(1-\beta\gamma)}{-e\ln\alpha} \le1$.

Therefore, for $0\le \beta<1$, there always exists a $1<\gamma<\frac{1}{\beta}$ such that for all $d\ge 1$, it holds that $\frac{d(1-\beta\gamma)x}{(\beta x+1)(x+\gamma)}\le1$, where $x$ satisfies that $x=\left(\frac{\beta x+1}{x+\gamma}\right)^d$. This implies $\Gamma<\frac{1}{\beta}$.
\end{proof}


\begin{definitions}
Let $\gamma(d)$ be the solution $\gamma$ of
\begin{align}
\frac{d(1-\beta\gamma)x(\gamma,d)}{(\beta x(\gamma,d)+1)(x(\gamma,d)+\gamma)} &=1 \label{eq:fix-point-2}
\end{align}
over $\gamma\in(1,\frac{1}{\beta})$, and define $\gamma(d)=1$ by convention if such solution does not exist.
\end{definitions}
The following lemma states that $\gamma(d)$ is well-defined and captures the uniqueness threshold for different instances of $d$.

\begin{lemma}\label{lemma-Gamma-gamma}
The followings hold for $\gamma(d)$:
\begin{enumerate}
\item
$\gamma(d)$ is a well-defined function for $d\ge 1$.
\item
$\Gamma=\sup_{d\ge1}\gamma(d)$.
\item
There exists a finite constant $D> 1$ such that $\Gamma=\gamma(D)$, and $D$ is a stationary point of $\gamma(d)$, i.e.~$\left.\deriv{\gamma}{d}\right|_{d=D}=0$.
\end{enumerate}
\end{lemma}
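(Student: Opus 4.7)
My plan is to treat $g_d(\gamma) := \tfrac{d(1-\beta\gamma)\,x(\gamma,d)}{(\beta x(\gamma,d)+1)(x(\gamma,d)+\gamma)}$, which equals $|f'(x(\gamma,d))|$ for the tree recursion $f(x)=((\beta x+1)/(x+\gamma))^d$, as a smooth function of $(\gamma,d)$, and extract the three assertions from its monotonicity in $\gamma$ together with its boundary behavior in $d$.

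For part (1), implicit differentiation of $x=((\beta x+1)/(x+\gamma))^d$ at the positive fixed point shows that $x(\gamma,d)$ is smooth and strictly decreasing in $\gamma$ on $[1,1/\beta)$. Combining this with a direct computation of $\partial_\gamma g_d$ gives that $g_d(\gamma)$ is continuous and strictly decreasing on $[1,1/\beta)$, with $g_d(\gamma)\to 0$ as $\gamma\to 1/\beta^-$ because the factor $1-\beta\gamma$ vanishes while $x(\gamma,d)\in(0,1)$ stays bounded. Consequently, for each $d\ge 1$ either $g_d(1)\le 1$, in which case $\gamma(d)=1$ by the convention, or $g_d(1)>1$ and the intermediate value theorem yields a unique root $\gamma(d)\in(1,1/\beta)$. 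Part (2) is then immediate: rewriting $\Gamma=\inf\{\gamma\ge 1:\forall d\ge 1,\,g_d(\gamma)\le 1\}$ and using the monotonicity of part (1) to replace the inner condition by $\gamma\ge\gamma(d)$, the joint condition over $d$ becomes $\gamma\ge\sup_{d\ge 1}\gamma(d)$, giving $\Gamma=\sup_{d\ge 1}\gamma(d)$.

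For part (3), I would verify three things. First, continuity of $\gamma(d)$ on $[1,\infty)$, which follows from joint continuity of $g_d(\gamma)$ in $(\gamma,d)$ together with strict monotonicity in $\gamma$. Second, $\gamma(d)=1$ in a right neighborhood of $d=1$: at $d=1$ the fixed-point equation lets one simplify $g_1(\gamma)=(1-\beta\gamma)/(x+\gamma)^2<1$ for all admissible $\gamma$ (since $x+\gamma>1$ and $1-\beta\gamma<1$), so by continuity the same inequality holds slightly above $d=1$. Third, $\gamma(d)\to 1$ as $d\to\infty$: the elementary bound $(\beta x+1)/(x+\gamma)\le 2/(1+\gamma)<1$ for $\gamma>1$ forces $x(\gamma,d)$ to decay geometrically in $d$, whence $g_d(\gamma)\to 0$ pointwise for every fixed $\gamma>1$, and by monotonicity $\gamma(d)$ collapses to $1$. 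Combined with $\Gamma>1$ from Lemma~\ref{lemma-well-define-gamma}, continuity forces the supremum to be attained at some finite $D>1$ lying in the interior of $\{d:\gamma(d)>1\}$. Finally, the implicit function theorem applied to $g_d(\gamma)=1$ (valid since $\partial_\gamma g_d\ne 0$ by strict monotonicity) shows $\gamma(d)$ is differentiable at $D$, and the first-order condition at an interior maximum yields $\gamma'(D)=0$.

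The main obstacle is the strict monotonicity of $g_d(\gamma)$ in $\gamma$, on which every step above rests. Intuitively larger $\gamma$ makes the recursion more contracting at its fixed point, but a clean proof requires combining the implicit formula for $\partial_\gamma x(\gamma,d)$ with the algebra of $g_d$, and one has to rule out sign cancellations that could force a case analysis. A secondary technical point is controlling $\gamma(d)$ uniformly as $d\to\infty$, which is handled by the explicit geometric decay of $x(\gamma,d)$.
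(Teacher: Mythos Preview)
Your outline is correct and matches the paper's approach closely for parts (1) and (2); part (3) is also correct but organized a bit differently. The obstacle you flag---strict monotonicity of $g_d(\gamma)$ in $\gamma$---is exactly the crux, and the paper avoids the brute-force derivative computation you worry about via a clean decomposition. Writing
\[
g_d(\gamma)=\frac{d(1-\beta\gamma)}{x(\gamma,d)+\gamma}\cdot\frac{1}{\beta+1/x(\gamma,d)},
\]
one argues: (i) $x(\gamma,d)$ is strictly decreasing in $\gamma$ (else the fixed-point identity $x=(\beta+\tfrac{1-\beta\gamma}{x+\gamma})^d$ would contradict itself), hence the second factor is strictly decreasing; and (ii) the first factor $\tfrac{1-\beta\gamma}{x+\gamma}$ is also strictly decreasing, since otherwise plugging back into the same fixed-point identity would make $x$ non-decreasing. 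No sign cancellations arise because both factors are positive and separately monotone; this replaces your proposed direct computation of $\partial_\gamma g_d$ entirely.

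For part (3) the paper's argument differs in detail from yours. Rather than using continuity together with $\gamma(d)=1$ near $d=1$ and $\gamma(d)\to 1$ at infinity, it derives the explicit bound $\gamma(d)\le d^{1/(d+1)}$ (from $x\le\gamma^{-d}$ and the defining equation $d(1-\beta\gamma)x=(\beta x+1)(x+\gamma)\ge\gamma$), and separately argues $\gamma(d)>1$ for all large $d$. Your route is arguably cleaner: the explicit evaluation $g_1(\gamma)=(1-\beta\gamma)/(x+\gamma)^2<1$ gives $\gamma(d)=1$ in a right neighborhood of $d=1$ directly, and combining continuity with $\Gamma>1$ (from the preceding lemma) forces the supremum to be attained at an interior $D>1$, after which the implicit function theorem and the first-order condition finish. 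Both approaches are valid; yours leans on Lemma~\ref{lemma-well-define-gamma} for $\Gamma>1$, while the paper's is more self-contained but slightly more laborious.
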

\begin{proof}
\begin{enumerate}
\item
We first show that for any $d\ge1$, there exists at most one $\gamma\in(1,\frac{1}{\beta})$ satisfying \eqref{eq:fix-point-2}, which will imply that $\gamma(d)$ is well-defined.

Observe that for any fixed $d\ge 1$, $x(\gamma,d)$ is strictly decreasing with respect to $\gamma$ over $\gamma\in(1,\frac{1}{\beta})$. By contradiction, assume that for some $d\ge 1$, $x$ is non-decreasing over $\gamma$. Then $x=\left(\frac{\beta x+1}{x+\gamma}\right)^d=\left(\beta+\frac{1-\beta\gamma}{x+\gamma}\right)^d$ is strictly decreasing over $\gamma$, a contradiction.

Therefore, $\frac{1-\beta\gamma}{x(\gamma,d)+\gamma}$ must be strictly decreasing with respect to $\gamma$, or otherwise $x=\left(\beta+\frac{1-\beta\gamma}{x+\gamma}\right)^d$ would have been non-decreasing, contradicting that $x(\gamma,d)$ is strictly decreasing.

Combining these together, we have
\begin{align*}
\frac{d(1-\beta\gamma)x(\gamma,d)}{(\beta x(\gamma,d)+1)(x(\gamma,d)+\gamma)}
=\frac{d(1-\beta\gamma)}{x(\gamma,d)+\gamma} \cdot \frac{1}{\beta+\frac{1}{x(\gamma,d)}}
\end{align*}
is strictly decreasing over $\gamma\in(1,\frac{1}{\beta})$. Thus, there exists at most one $\gamma\in(1,\frac{1}{\beta})$ satisfying \eqref{eq:fix-point-2}. Therefore, $\gamma(d)$ is well-defined.

\item
We then show that $\Gamma=\sup_{d\ge1}\gamma(d)$.
For any $d\ge 1$, let
\begin{align*}
\Gamma_d(\beta)=\inf\left\{\gamma\ge1 \Bigm{|} \frac{d(1-\beta\gamma)x(\gamma,d)}{(\beta x(\gamma,d)+1)(x(\gamma,d)+\gamma)}\le1\right\}.
\end{align*}
Note that for any $d\ge 1$, when $\gamma\rightarrow\frac{1}{\beta}$, $\frac{d(1-\beta\gamma)x(\gamma,d)}{(\beta x(\gamma,d)+1)(x(\gamma,d)+\gamma)}\rightarrow0<1$, thus $\Gamma_d(\beta)<\frac{1}{\beta}$. In addition to that, since $\frac{d(1-\beta\gamma)x(\gamma,d)}{(\beta x(\gamma,d)+1)(x(\gamma,d)+\gamma)}$ is strictly decreasing over $\gamma\in(1,\frac{1}{\beta})$, $\Gamma_d(\beta)$ is either equal to the unique solution $\gamma$ of $\frac{d(1-\beta\gamma)x(\gamma,d)}{(\beta x(\gamma,d)+1)(x(\gamma,d)+\gamma)}=1$ over $\gamma\in(0,\frac{1}{\beta})$ or equal to 1 if such solution does not exist. Therefore,
\begin{align*}
\Gamma_d(\beta)=\gamma(d).
\end{align*}

Since $\frac{d(1-\beta\gamma)x(\gamma,d)}{(\beta x(\gamma,d)+1)(x(\gamma,d)+\gamma)}$ is strictly decreasing over $\gamma\in(1,\frac{1}{\beta})$, for any $\gamma\in(1,\frac{1}{\beta})$ that $\gamma\ge\Gamma_d(\beta)$ for all $d\ge 1$, it holds that $\frac{d(1-\beta\gamma)x(\gamma,d)}{(\beta x(\gamma,d)+1)(x(\gamma,d)+\gamma)}\le 1$ for all $d\ge1$, i.e.~$\gamma\ge\Gamma(\beta)$.
Thus, $\Gamma(\beta)\le\sup_{d\ge 1}\Gamma_d(\beta)$. The other direction $\Gamma(\beta)\ge\sup_{d\ge 1}\Gamma_d(\beta)$ is universal. Therefore,
\begin{align*}
\Gamma(\beta)=\sup_{d\ge 1}\Gamma_d(\beta)=\sup_{d\ge1}\gamma(d).
\end{align*}

\item
We show that there is a finite $D>1$ that $\gamma(D)=\sup_{d\ge1}\gamma(d)$. 

First notice that $D>1$. By contradiction assume that $D=1$. Substituting $x$ in $(1-\beta\gamma)x=(\beta x+1)(x+\gamma)$ with the positive solution of $x=\left(\frac{\beta x+1}{x+\gamma}\right)$ gives us a $\gamma<1$. Then by conventional definition, $\gamma(D)=1$. From the previous analysis, we know that $\gamma(D)=\sup_{d\ge1}\gamma(d)=\Gamma$ and due to Lemma \ref{lemma-well-define-gamma}, $\Gamma>1$. A contradiction.


We treat $x=x(\gamma(d),d)$ as a single-variate function of $d$. We claim that $x\rightarrow0$ as $d\rightarrow\infty$. By contradiction, if $x$ is bounded away from 0 by a constant as $d\rightarrow\infty$, then $x=\left(\beta+\frac{1-\beta\gamma}{x+\gamma}\right)^d\le\left(\beta+\frac{1-\beta}{x+1}\right)^d\rightarrow 0$ as $d\rightarrow\infty$, a contradiction.

Therefore, when $d\rightarrow\infty$, it must hold that $\gamma(d)>1$, because if otherwise $\gamma\le 1$, since $x\rightarrow 0$ as $d\rightarrow\infty$, it holds that $x=\left(\frac{\beta x+1}{x+\gamma}\right)^d\rightarrow\frac{1}{\gamma^d}$, which approaches either $1$ or $\infty$ as $d\rightarrow\infty$, which contradicts that $x\rightarrow 0$ as $d\rightarrow\infty$.

We just show that $\gamma(d)>1$ for sufficiently large $d$, which means that for these $d$s, $\gamma(d)$ is  defined by \eqref{eq:fix-point-2} instead of defined by the convention $\gamma(d)=1$. Thus, for sufficiently large $d$, $x=x(\gamma(d),d)$ and $\gamma=\gamma(d)$ can be treated as single-variate functions of $d$ satisfying both \eqref{eq:fix-point-1} and \eqref{eq:fix-point-2}.

For $\beta\gamma<1$, it holds that $x=\left(\frac{\beta x+1}{x+\gamma}\right)^d\le\frac{1}{\gamma^d}$, thus
\begin{align*}
\frac{d}{\gamma^d}
\ge
\frac{d(1-\beta\gamma)}{\gamma^d}
\ge
d(1-\beta\gamma)x
=
(\beta x+1)(x+\gamma)
\ge 
\gamma,
\end{align*}
where the equality holds by \eqref{eq:fix-point-2}. Thus, $\gamma(d)\le d^{\frac{1}{d+1}}$.

Recall that $\gamma(d)>1$ for all sufficiently large $d$, thus there is a finite $d$ such that $\gamma(d)$ is bounded from below by a constant greater than 1. On the other hand, $\gamma(d)\le d^{\frac{1}{d+1}}=1$ as $d\rightarrow\infty$. Therefore, there is a finite $D$ such that $\gamma(D)=\sup_{d\ge 1}\gamma(d)$.
Due to Lemma \ref{lemma-Gamma-gamma}, this implies $\gamma(D)=\Gamma$.

Since $\gamma(D)=\sup_{d\ge 1}\gamma(d)$ and $D$ is neither infinite nor equal to 1, $D$ must be a stationary point of $\gamma(d)$, i.e.~$\left.\deriv{\gamma}{d}\right|_{d=D}=0$.
\end{enumerate}
\end{proof}

We can then define the crucial $D$ which generates the highest uniqueness threshold $\Gamma$.

\begin{definition}\label{definition-D-X}
Let $D$ be the value satisfying $\gamma(D)=\Gamma$. Let $X=x(\Gamma,D)$.
\end{definition}
It is obvious that both \eqref{eq:fix-point-1} and \eqref{eq:fix-point-2} hold for $\gamma=\Gamma$, $d=D$, and $x=X$. Two less obvious but very useful identities are given in the following lemma.

\begin{lemma}\label{lemma-fix-point-id}
The followings hold for $\Gamma,D$ and $X$.
\begin{align*}
1.\quad
&\frac{\beta}{\Gamma}\le\sqrt{\beta\Gamma}
<\frac{D-1}{D+1};\\
2.\quad
&\ln\left(\frac{\beta X+1}{X+\Gamma}\right)
=\frac{2(\beta X+1)}{(D+1)(\beta X+1)-2D}=\frac{2D(1-\beta\Gamma) X}{(\beta X+1)(2D X-(D+1)( X+\Gamma))}.
\end{align*}
\end{lemma}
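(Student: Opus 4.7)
The proof will rest on three ingredients: the fixed-point equation $X = \left(\frac{\beta X+1}{X+\Gamma}\right)^D$ from \eqref{eq:fix-point-1}, the threshold equation $\frac{D(1-\beta\Gamma)X}{(\beta X+1)(X+\Gamma)} = 1$ from \eqref{eq:fix-point-2}, and the stationarity $\left.\deriv{\gamma}{d}\right|_{d=D}=0$ from Lemma \ref{lemma-Gamma-gamma}. Expanding \eqref{eq:fix-point-2} as $(\beta X+1)(X+\Gamma) = D(1-\beta\Gamma)X$ gives the polynomial identity
\[
\beta X^2 + \Gamma \;=\; X\bigl[(D-1) - \beta\Gamma(D+1)\bigr], \qquad (\star)
\]
which I will use throughout.

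For Part 1, the inequality $\beta/\Gamma\le\sqrt{\beta\Gamma}$ is (after squaring and cancelling $\beta$, trivially when $\beta=0$) equivalent to $\beta\le\Gamma^3$, which holds since $\beta<1<\Gamma$ by Lemma \ref{lemma-well-define-gamma}. For the strict inequality $\sqrt{\beta\Gamma}<(D-1)/(D+1)$, I apply AM-GM to the left side of $(\star)$: $\beta X^2 + \Gamma \ge 2X\sqrt{\beta\Gamma}$, with strict inequality because $\beta X^2 < \Gamma$ (since $\beta<1$, $X<1<\Gamma$). Combining with $(\star)$ yields $2\sqrt{\beta\Gamma} < (D-1) - (D+1)\beta\Gamma$. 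Setting $u=\sqrt{\beta\Gamma}$ (so $u<1$ since $\beta\Gamma<1$), this rearranges to $D(1-u^2) > (1+u)^2$, i.e.\ $D(1-u) > 1+u$, which gives $u < \frac{D-1}{D+1}$.

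For Part 2, I use implicit differentiation along the curve $d\mapsto (\gamma(d), x(\gamma(d),d))$. Taking the logarithm of \eqref{eq:fix-point-1} and differentiating with respect to $d$, then evaluating at $d=D$ and using $\gamma'(D)=0$: the coefficient of $x'$ simplifies via $\frac{\beta}{\beta X+1}-\frac{1}{X+\Gamma} = \frac{\beta\Gamma-1}{(\beta X+1)(X+\Gamma)} = -\frac{1}{DX}$ (the second step by \eqref{eq:fix-point-2}), which produces $\frac{2x'}{X}=\frac{\ln X}{D}$, i.e.\ $x'(D)=\frac{X\ln X}{2D}$. Similarly, differentiating the logarithm of \eqref{eq:fix-point-2} at $d=D$ with $\gamma'(D)=0$ gives $x'(D)=\frac{X(\beta X+1)(X+\Gamma)}{D(\beta X^2-\Gamma)}$. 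Equating these two expressions for $x'(D)$ and substituting $\ln X = D\ln\!\left(\frac{\beta X+1}{X+\Gamma}\right)$ from \eqref{eq:fix-point-1} yields
\[
\ln\!\left(\frac{\beta X+1}{X+\Gamma}\right) = \frac{2(\beta X+1)(X+\Gamma)}{D(\beta X^2 - \Gamma)}.
\]
From here, the two claimed forms follow by algebra: replacing $(\beta X+1)(X+\Gamma)$ in the numerator by $D(1-\beta\Gamma)X$ via \eqref{eq:fix-point-2} immediately gives the third (rightmost) expression; and the middle expression is obtained by rewriting $(\star)$ in the form $\beta X^2+2X+\Gamma = (D+1)(1-\beta\Gamma)X$ and using it, together with \eqref{eq:fix-point-2}, to convert the denominator $D(\beta X^2-\Gamma)$ into $(X+\Gamma)\bigl[(D+1)(\beta X+1)-2D\bigr]$, after which $(X+\Gamma)$ cancels.

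The main obstacle is the careful implicit differentiation of both \eqref{eq:fix-point-1} and \eqref{eq:fix-point-2} at the stationary point, and the bookkeeping of the three equivalent algebraic forms of the identity; each individual manipulation is elementary, but the cancellations require using the fixed-point and threshold relations in the right order.
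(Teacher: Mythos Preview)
Your proposal is correct and follows essentially the same route as the paper: Part~1 via AM--GM applied to the threshold identity \eqref{eq:fix-point-2} (you phrase it through $(\star)$, the paper through $\beta X+\Gamma/X\ge 2\sqrt{\beta\Gamma}$), and Part~2 by implicitly differentiating both \eqref{eq:fix-point-1} and \eqref{eq:fix-point-2} along the curve $d\mapsto(\gamma(d),x(\gamma(d),d))$ at $d=D$ using $\gamma'(D)=0$, equating the two expressions for $x'(D)$, and then massaging the denominator $D(\beta X^2-\Gamma)$ into the two displayed forms via \eqref{eq:fix-point-2}. The only cosmetic differences are that the paper differentiates the polynomial form of \eqref{eq:fix-point-2} rather than its logarithm, and takes $\partial x/\partial d$ with $\gamma$ held fixed before invoking the chain rule, which coincides with your total derivative because $\gamma'(D)=0$.
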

\begin{proof}
\begin{enumerate}
\item
Since $\gamma=\Gamma$, $d=D$, and $x= X$ satisfies \eqref{eq:fix-point-2}, it holds that
\begin{align*}
D(1-\beta\Gamma)
&=\frac{(\beta X+1)( X+\Gamma)}{ X}
=\left(\beta  X+\frac{\Gamma}{ X}\right)+\beta\Gamma+1
\ge2\sqrt{\beta\Gamma}+\beta\Gamma+1
=\left(1+\sqrt{\beta\Gamma}\right)^2,
\end{align*}
where the inequality is due to the inequality of arithmetic and geometric means. Thus, $D\ge\frac{1+\sqrt{\beta\Gamma}}{1-\sqrt{\beta\Gamma}}$. Therefore,
\begin{align*}
\frac{D-1}{D+1}
&=1-\frac{2}{D+1}
\ge\sqrt{\beta\Gamma}
\ge\frac{\beta}{\Gamma},
\end{align*}
where the last inequality is implied trivially by that $0\le \beta<1$ and $\Gamma>1$.
\item
Recall that $ X=x(\Gamma, D)$ and $\Gamma=\gamma(D)$, where $x(\gamma,d)$ is defined by \eqref{eq:fix-point-1}, and $\gamma(d)$ is defined by \eqref{eq:fix-point-2}. Thus, $x=x(\gamma(d),d)$ and $\gamma=\gamma(d)$ can be treated as single-variate functions of $d$ satisfying both \eqref{eq:fix-point-1} and \eqref{eq:fix-point-2}.

The following identity is implied by \eqref{eq:fix-point-2}:
\begin{align}
d(1-\beta\gamma)x &=(\beta x+1)(x+\gamma). \label{eq:fix-point-4}
\end{align}
Taking the derivatives with respect to $d$ at $d=D$ for both sides of \eqref{eq:fix-point-4}, we have
\begin{align*}
\left.\left((\beta (x+\gamma)+(\beta x+1)-d(1-\beta\gamma))\deriv{x}{d}
+(\beta x(d+1)+1)\deriv{\gamma}{d}\right)\right|_{d=D}
=\left.(1-\beta\gamma)x\right|_{d=D}.
\end{align*}
Due to Lemma \ref{lemma-Gamma-gamma}, it holds that $\left.\deriv{\gamma}{d}\right|_{d=D}=0$. Then
\begin{align}
\left.\deriv{x}{d}\right|_{d=D}
&=
\frac{(1-\beta\Gamma) X}{\beta ( X+\Gamma)+(\beta  X+1)-D(1-\beta\Gamma)}\nonumber\\
&=
\frac{(1-\beta\Gamma) X^2}{\beta ( X+\Gamma) X+(\beta  X+1) X-(\beta X+1)( X+\Gamma)}
&\text{(applying \eqref{eq:fix-point-4})}\nonumber\\
&=
\frac{(1-\beta\Gamma) X^2}{\beta X^2-\Gamma}.
\label{eq:dx-dd}
\end{align}

Recall that $x(\gamma,d)$ is defined by \eqref{eq:fix-point-1}. Applying logarithm to both side of \eqref{eq:fix-point-1}, we have
\begin{align*}
\ln x = d\ln\left(\frac{\beta x+1}{x+\gamma}\right).
\end{align*}
Taking the partial derivatives with respect to $d$ for both sides,
\begin{align*}
\frac{1}{x}\pderiv{x}{d} &=  \ln\left(\frac{\beta x+1}{x+\gamma}\right)+\frac{\beta d}{(\beta x+1)}\cdot\pderiv{x}{d}-\frac{d}{(x+\gamma)}\cdot\pderiv{x}{d}.
\end{align*}
which implies that
\begin{align*}
\pderiv{x}{d}
&=\frac{x(\beta x+1)(x+\gamma)\ln\left(\frac{\beta x+1}{x+\gamma}\right)}{(\beta x+1)(x+\gamma)+d(1-\beta\gamma)x}\\
&=\frac{x(\beta x+1)(x+\gamma)\ln\left(\frac{\beta x+1}{x+\gamma}\right)}{(\beta x+1)(x+\gamma)+(\beta x+1)(x+\gamma)} & \text{(applying  \eqref{eq:fix-point-4})}\\
&=\frac{x}{2}\ln\left(\frac{\beta x+1}{x+\gamma}\right).
\end{align*}

Due to the total derivative formula, and that $\left.\deriv{\gamma}{d}\right|_{d=D}=0$,
\begin{align*}
\left.\deriv{x}{d}\right|_{d=D}
&=
\left.\pderiv{x(\gamma,d)}{\gamma}\cdot\deriv{\gamma(d)}{d}\right|_{d=D}+\left.\pderiv{x(\gamma,d)}{d}\right|_{d=D}
=0+\left.\pderiv{x(\gamma,d)}{d}\right|_{d=D}
=\frac{ X}{2}\ln\left(\frac{\beta  X+1}{ X+\Gamma}\right).
\end{align*}

Combining with \eqref{eq:dx-dd}, we have
\begin{align*}
\ln\left(\frac{\beta  X+1}{ X+\Gamma}\right)
&= \frac{2(1-\beta\Gamma) X}{\beta X^2-\Gamma}
\end{align*}
The equations in the lemma are consequences of the above equation. Specifically,
\begin{align*}
\frac{2(1-\beta\Gamma) X}{\beta X^2-\Gamma}
&= \frac{2D(1-\beta\Gamma) X}{D(\beta X-1)( X+\Gamma)+D(1-\beta\Gamma) X}\\
&= \frac{2(\beta X+1)( X+\Gamma)}{D(\beta X-1)( X+\Gamma)+(\beta X+1)( X+\Gamma)} &\text{(applying \eqref{eq:fix-point-4})}\\
&=\frac{2(\beta X+1)}{(D+1)(\beta X+1)-2D};
\end{align*}
and
\begin{align*}
\frac{2(1-\beta\Gamma) X}{\beta X^2-\Gamma}
&= \frac{2D(1-\beta\Gamma) X}{D(\beta X+1)( X-\Gamma)-D(1-\beta\Gamma) X}\\
&= \frac{2D(1-\beta\Gamma) X}{D(\beta X+1)( X-\Gamma)-(\beta X+1)( X+\Gamma)} &\text{(applying \eqref{eq:fix-point-4})}\\
&=\frac{2D(1-\beta\Gamma) X}{(\beta X+1)(2D X-(D+1)( X+\Gamma))}.
\end{align*}
\end{enumerate}
\end{proof}

}

\section{Computationally Efficient Correlation Decay}\label{section-analysis}
We prove Theorem \ref{thm:rank-decay}, justifying the computationally efficient correlation decay.

We use $R_v$ and $R_v+\delta_v$ to respectively denote the lower and upper bounds of $R_T^{\sigma_{\Lambda}}$ where $T$ is rooted by $v$.  For fixed vertices $v\in B_M^*(L)$, set $R_v=\infty$ if $v$ is blue (and $R_v=0$ if $v$ is green) and $\delta=0$.
For all free vertices $v\in B_M^*(L)$, supposed that $v$ has $d_1$ children fixed to be blue, $d_0$ children fixed to be green, and $d$ free children $v_1,\ldots,v_d$, the recursion \eqref{eq:recursion} gives that
\begin{align}
R_v+\delta_v=  \frac{\beta^{d_1}}{\gamma^{d_0}} \prod_{i=1}^d \frac{\beta R_{v_i} +1 }{R_{v_i} +\gamma}\ \ \ \mbox{ and }  \ \ \ \ R_v=  \frac{\beta^{d_1}}{\gamma^{d_0}} \prod_{i=1}^d \frac{\beta (R_{v_i} +\delta_{v_i})+1 }{R_{v_i}+\delta_{v_i} +\gamma}.\label{eq:R-recursion}
\end{align}
And for all vertices $v\not\in B_M^*(L)$, we use the naive bounds that $R_v=0$ and $\delta_v=\infty$.

Since $\gamma>\Gamma>1>\beta\ge0$, the range of the recursion is $(0,1]$ as long as the inputs are positive. Thus for all free vertices $v\in B_M^*(L)$, it holds that $0<R_v\leq R_v+\delta_v \le1$.

Due to the monotonicity of the recursion, denoted by $r$ the root of the tree,  $R_r$ and $R_r+\delta_r$ are lower and upper bounds respectively for all $R_T^{\tau_\Lambda}$ where $\tau_\Lambda=\sigma_\Lambda$ in $B_M^*(L)$. Theorem \ref{thm:rank-decay} is then implied by that $\delta_r\leq \exp(-\Omega(L))$.

Let $f(x_1,\ldots,x_d)= \frac{\beta^{d_1}}{\gamma^{d_0}} \prod_{i=1}^d \frac{\beta x_i +1 }{x_i +\gamma}$. Then the recursions \eqref{eq:R-recursion} can be written as that $R_v+\delta_v=f(R_{v_1},\ldots,R_{v_d})$ and $R_v=f(R_{v_1}+\delta_{v_1},\ldots,R_{v_d}+\delta_{v_d})$. Due to the Mean Value Theorem, there exist $\widetilde{R_i}\in[R_{v_i}, R_{v_i}+\delta_{v_i}]$, $1\le i\le d$, such that
\begin{align*}
\delta_v
& =    
f(R_{v_1},\ldots,R_{v_d})-f(R_{v_1}+\delta_{v_1},\ldots,R_{v_d}+\delta_{v_d})\\
&=
-\nabla f(\widetilde{R_1},\ldots, \widetilde{R_d})\cdot(\delta_{v_1},\ldots,\delta_{v_d})\\
&=
\frac{\beta^{d_1}}{\gamma^{d_0}} (1-\beta \gamma) \cdot \prod_{i=1}^d \frac{\beta \widetilde{R_i}+1} {\widetilde{R_i}+\gamma}  \cdot \sum_{i=1}^d  \frac{\delta_{v_i}}{  (\beta \widetilde{R_i} +1)(\widetilde{R_i}+\gamma)}.
\end{align*}


A straightforward estimation gives that
\begin{align*}
\frac{\delta_v}{\max_{1\le i\le d}\{\delta_{v_i}\}}
&\leq
\frac{\beta^{d_1}}{\gamma^{d_0}} (1-\beta \gamma) \cdot \prod_{i=1}^d \frac{\beta \widetilde{R_i}  +1} {\widetilde{R_i}+\gamma}  \cdot \sum_{i=1}^d  \frac{1}{  (\beta \widetilde{R_i} +1)(\widetilde{R_i}+\gamma)}.
\end{align*}
If this ratio is  bounded by a constant less than 1, then the gap $\delta$ shrinks by a constant factor for each step of recursion, thus an exponential decay would have been established. However, such a step-wise guarantee of decay holds in general only when the $\gamma$  is substantially greater than $\Gamma(\beta)$. A simulation shows that when $\gamma$ is sufficiently close to $\Gamma(\beta)$, the gap $\delta$ may indeed increase for some specific $d$ and $R_i$. We then apply an amortized analysis to show that even though the gap $\delta$ may occasionally increase, it decays exponentially in a long run.

\subsection{Amortized analysis of correlation decay}
We use the potential method to analyze the amortized behavior of correlation decay. The potential function is defined as
\begin{align*}
\Phi(R)
&=
R^{\frac{D+1}{2D}}(\beta R+1),
\end{align*}
\ifabs{where $D$ is the crucial $d$ which generates the highest uniqueness threshold. Formally, the infimum of such $\gamma$ that satisfies  $\frac{d(1-\beta\gamma)(\beta \hat{x}+1)^{d-1}}{(\hat{x}+\gamma)^{d+1}}\le 1$ where $\hat{x}=\left(\frac{\beta\hat{x} + 1}{\hat{x} + \gamma}\right)^d$ is maximized and equal to $\Gamma(\beta)$ when $d=D$.}{where $D$ is the crucial $d$ which generates the highest uniqueness threshold as formally defined in Section \ref{app-Gamma-D}}


We will analyze the decay rate of  $\frac{\delta}{\Phi}$ instead of $\delta$. This is done by introducing a monotone function $\varphi(R)$, which is implicitly defined by its derivative $\varphi'(R)=\frac{1}{\Phi(R)}$. We denote that $y_v=\varphi(R_v)$ and $y_v+\epsilon_v=\varphi(R_v+\delta_v)$.
Recall that $R_v+\delta_v=f(R_{v_1},\ldots,R_{v_d})$ and $R_v=f(R_{v_1}+\delta_{v_1},\ldots,R_{v_d}+\delta_{v_d})$ where $f(x_1,\ldots,x_d)= \frac{\beta^{d_1}}{\gamma^{d_0}} \prod_{i=1}^d \frac{\beta x_i +1 }{x_i +\gamma}$. Then
\begin{align*}
y_v &=\varphi(R_v)\\
&=\varphi\left(f(R_{v_1}+\delta_{v_1},\ldots,R_{v_d}+\delta_{v_d})\right)\\
&=\varphi\left(f(\varphi^{-1}(y_{v_1}+\epsilon_{v_1}),\ldots,\varphi^{-1}(y_{v_d}+\epsilon_{v_d}))\right); \quad\text{and}\\
y_v+\epsilon_v &=\varphi(R_v+\delta_v)\\
&=\varphi\left(f(R_{v_1},\ldots,R_{v_d})\right)\\
&=\varphi\left(f(\varphi^{-1}(y_{v_1}),\ldots,\varphi^{-1}(y_{v_d}))\right).
\end{align*}
By the Mean Value Theorem, there exists an $\widetilde{R}\in[R_v,R_v+\delta_v]$ such that
\begin{align}
\epsilon_v = \varphi(R_v+\delta_v)-\varphi(R_v) = \delta_v\cdot \varphi'(\widetilde{R})=\frac{\delta_v}{\Phi(\widetilde{R})}.\label{eq:epsilon-delta}
\end{align}
By the Mean Value Theorem, there exist $\widetilde{R_i}\in[R_{v_i},R_{v_i}+\delta_{v_i}]$, $1\le i\le d$, such that
\begin{align}
\epsilon_v
&=
\varphi\left(f(R_{v_1},\ldots,R_{v_d})\right)-\varphi\left(f(R_{v_1}+\delta_{v_1},\ldots,R_{v_d}+\delta_{v_d})\right) \notag\\
&=
-\nabla \varphi\left(f(\widetilde{R_1},\ldots,\widetilde{R_d})\right)\cdot(\delta_{v_1},\ldots,\delta_{v_d}) \notag\\
&=
(1-\beta \gamma) \cdot \frac{\left( \frac{\beta^{d_1}}{\gamma^{d_0}} \prod_{i=1}^d \frac{\beta \widetilde{R_i}  +1} {\widetilde{R_i}+\gamma} \right)^{\frac{D-1}{2D}} }{\beta  \frac{\beta^{d_1}}{\gamma^{d_0}} \prod_{i=1}^d \frac{\beta \widetilde{R_i}  +1} {\widetilde{R_i}+\gamma} +1}
  \cdot \sum_{i=1}^d  \frac{\delta_{v_i}}{  (\beta \widetilde{R_i} +1)(\widetilde{R_i} +\gamma)} \notag\\
&\le
\frac{d}{\gamma^{(d_0+d_1+d)\frac{D-1}{2D}}},  \label{eq:epsilon-bootstrap}
\end{align}
where \eqref{eq:epsilon-bootstrap} is trivially implied by that $\widetilde{R_i}\in(0, 1]$, $\gamma>1$ and $\beta\gamma<1$.
By the Mean Value Theorem, there exist $\widetilde{y_i}\in[y_{v_i},y_{v_i}+\epsilon_{v_i}]$ and due to the monotonicity of $\varphi(\cdot)$, corresponding  $\widetilde{R_i}\in[R_{v_i},R_{v_i}+\delta_{v_i}]$ that $\widetilde{y_i}=\varphi(\widetilde{R_i})$, $1\le i\le d$, such that
\begin{align}
\epsilon_v
&=
\varphi\left(f(\varphi^{-1}(y_1),\ldots,\varphi^{-1}(y_d))\right)-\varphi\left(f(\varphi^{-1}(y_1+\epsilon_1),\ldots,\varphi^{-1}(y_d+\epsilon_d))\right) \notag\\
&=
-\nabla \varphi\left(f(\varphi^{-1}(\widetilde{y_1}),\ldots,\varphi^{-1}(\widetilde{y_d}))\right)\cdot(\epsilon_1,\ldots,\epsilon_d) \notag\\
&=
\frac{(1-\beta \gamma) \left( \frac{\beta^{d_1}}{\gamma^{d_0}} \prod_{i=1}^d \frac{\beta \widetilde{R_i} +1} {\widetilde{R_i} +\gamma} \right)^{\frac{D-1}{2D}} }{\beta  \frac{\beta^{d_1}}{\gamma^{d_0}} \prod_{i=1}^d \frac{\beta \widetilde{R_i} +1} {\widetilde{R_i} +\gamma} +1}\cdot \sum_{i=1}^d\frac{\widetilde{R_i}^{\frac{D+1}{2D}} \cdot \epsilon_{v_i}}{\widetilde{R_i} +\gamma } \notag\\
&\le
\max_{1\le i\le d}\{\epsilon_{v_i}\}\cdot \frac{(1-\beta \gamma) \left( \frac{\beta^{d_1}}{\gamma^{d_0}} \prod_{i=1}^d \frac{\beta \widetilde{R_i} +1} {\widetilde{R_i} +\gamma} \right)^{\frac{D-1}{2D}} }{\beta  \frac{\beta^{d_1}}{\gamma^{d_0}} \prod_{i=1}^d \frac{\beta \widetilde{R_i} +1} {\widetilde{R_i} +\gamma} +1}\cdot \sum_{i=1}^d  \frac{\widetilde{R_i}^{\frac{D+1}{2D}} }{\widetilde{R_i} +\gamma }. \label{eq:bootstrap}
\end{align}
Since $\widetilde{R_i}\in(0,1]$, $\gamma>1$, and $\beta\gamma<1$, \eqref{eq:bootstrap} trivially implies that
\begin{align}
\epsilon_v
&\le 
\max_{1\le i\le d}\{\epsilon_{v_i}\}\cdot d\left(\frac{\beta^{d_0}}{\gamma^{d_1}}\prod_{i=1}^d \frac{\beta \widetilde{R_i} +1} {\widetilde{R_i} +\gamma}\right)^{\frac{D-1}{2D}}\le \frac{d}{\gamma^{(d_0+d_1+d)\frac{D-1}{2D}}}\cdot  \max_{1\le i\le d}\{\epsilon_{v_i}\},
\label{eq:step-1}
\end{align}
On the other hand, we know that $\beta\le\sqrt{\beta\Gamma}<\frac{D-1}{D+1}$ (\ifabs{the proof is in full version}{due to Lemma \ref{lemma-fix-point-id} in Section \ref{app-Gamma-D}}).  It is easy to verify that function $\frac{x^\frac{D-1}{2D}}{\beta x+1}$ is monotonically increasing when $x\le1$. Then the following is also implied by \eqref{eq:bootstrap}:
\begin{align}
\epsilon_v
&\le
\alpha(d; \widetilde{R_1},\ldots,\widetilde{R_d})\cdot\max_{1\le i\le d}\{\epsilon_{v_i}\}.\label{eq:amortized-decay}
\end{align}
where the function $\alpha(d;x_1,\ldots,x_d)$ captures the amortized decay, defined as
\begin{align}
\alpha(d;x_1,\ldots,x_d)
&=\frac{(1-\beta \gamma) \left( \prod_{i=1}^d \frac{\beta x_i +1} {x_i +\gamma} \right)^{\frac{D-1}{2D}} }{\beta \prod_{i=1}^d \frac{\beta x_i+1} {x_i+\gamma} +1}\cdot \sum_{i=1}^d  \frac{x_i^{\frac{D+1}{2D}} }{x_i +\gamma }. \label{eq:alpha}
\end{align}

\ifabs{}{
Our goal is to upper bound the $\alpha(d;x_1,\ldots,x_d)$ assuming the uniqueness condition.
A concave analysis reduces the upper bound to the symmetric cases that all $x_i$ are equal.

\begin{lemma}\label{lemma-symmetrization}
Let  $0\le\beta<1$, $\gamma>\Gamma(\beta)$, and $\beta\gamma<1$. Then for any $d\ge 1$ and any $x_1,\ldots,x_d\in (0,1]$, there exists an $x\in(0,1]$, such that $\alpha(d;x_1,\ldots,x_d)$ is maximized when all $x_i=x$.
\end{lemma}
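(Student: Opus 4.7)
The plan is to factor $\alpha$ as a product of a quantity depending only on $P:=\prod_i u(x_i)$ and a plain sum, and then apply a concavity (Jensen) argument on the sum with $P$ held fixed.

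Set $u(x):=\frac{\beta x+1}{x+\gamma}$ and $g(x):=\frac{x^{(D+1)/(2D)}}{x+\gamma}$, so that
\[
\alpha(d;x_1,\ldots,x_d) \;=\; A(P)\cdot\sum_{i=1}^d g(x_i),
\qquad A(P):=\frac{(1-\beta\gamma)\,P^{(D-1)/(2D)}}{\beta P+1}.
\]
Since $A(P)$ depends on the $x_i$ only through $P$, it suffices to show that for every fixed $P$ the constrained maximum of $\sum_i g(x_i)$ over $\prod_i u(x_i)=P$ and $x_i\in(0,1]$ is attained at the symmetric point $x_1=\cdots=x_d$. Taking the outer supremum over $P$ then still witnesses a symmetric maximizer, and compactness (after continuous extension to $[0,1]^d$) yields existence of some $x\in(0,1]$.

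Because $\beta\gamma<1$, $u$ is strictly decreasing on $(0,1]$, so $t_i:=\ln u(x_i)$ is a bijective change of variable from $(0,1]$ onto an interval $I$. The constraint becomes linear, $\sum_i t_i=\ln P$, and the sum becomes $\sum_i h(t_i)$ where $h(t):=g\bigl(u^{-1}(e^t)\bigr)$. If $h$ is concave on $I$, Jensen's inequality gives
\[
\sum_{i=1}^d h(t_i)\;\le\; d\cdot h\!\left(\tfrac{1}{d}\textstyle\sum_i t_i\right),
\]
and the right-hand side is exactly the symmetric value of the sum, completing the reduction. So the main technical step is to verify $h''(t)\le 0$. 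Using $x'(t)=u(x)/u'(x)=(\beta x+1)(x+\gamma)/(\beta\gamma-1)$ and $h'(t)=g'(x)\,x'(t)$, the inequality $h''(t)\le 0$ rearranges into a polynomial inequality in $x,\beta,\gamma,D$ on the range $x\in(0,1]$.

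The principal obstacle is this concavity calculation. At the threshold $\gamma=\Gamma(\beta)$ with $d=D$ the inequality becomes tight, so coarse estimates will not suffice; the argument must invoke the stationarity identity $\left.\deriv{\gamma}{d}\right|_{d=D}=0$ and the closed-form expressions for $(\Gamma,D,X)$ supplied by Lemma \ref{lemma-fix-point-id} to cancel the leading terms before reading off the sign. Boundary contributions are mild: at $x_i\to 0$ the $i$-th term of the sum vanishes while $u(x_i)\to 1/\gamma$ stays finite, so any boundary maximizer is dominated by an interior configuration of the same effective dimension; the limiting symmetric $x$ therefore lies in $(0,1]$ as required.
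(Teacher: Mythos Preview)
Your decomposition and change of variable are exactly what the paper does: with $y_i=\ln u(x_i)$ the expression factors as a function of $\sum_i y_i$ times $\sum_i f(y_i)$, and Jensen's inequality applied to the concave $f$ yields the symmetric maximizer. So the approach is correct and essentially identical to the paper's.

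However, you have misidentified the obstacle. The concavity $h''(t)\le 0$ does \emph{not} become tight at the uniqueness threshold, and Lemma~\ref{lemma-fix-point-id} and the stationarity identity play no role here whatsoever. The paper computes $f''(y)$ directly: it factors as a positive term times a quadratic $g(y,D)=A+BD+CD^2$ in $D$, verifies that the coefficients $B$ and $C$ are negative for $e^y\in(\beta,1/\gamma)$ and $x\in(0,1]$, and then simply evaluates at $D=1$ to obtain $f''(y)=-\gamma e^y<0$. Since $B,C<0$ and $D\ge 1$, this bounds $g(y,D)\le g(y,1)<0$ for all $D\ge1$. The argument is robust and uses nothing about $\Gamma$, $X$, or the critical $D$; it only needs $0\le\beta<1$, $\gamma>1$, $\beta\gamma<1$, and $D\ge1$. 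The delicate threshold analysis you anticipate belongs entirely to the \emph{next} lemma (Lemma~\ref{lemma-symmetric-extreme}), which bounds the symmetric quantity $\alpha(d,x)$ by $1$; you appear to have conflated the two steps.
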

\begin{proof}
We denote $y_i= \ln(\frac{\beta x_i +1} {x_i +\gamma})$, then $x_i= \frac{1-\beta \gamma}{e^{y_i}-\beta}-\gamma$ and
\begin{align*}
\alpha(d;x_1,\ldots,x_d)
&=
\frac{(1-\beta \gamma)\exp\left(\frac{D-1}{2D}\sum_{i=1}^d y_i \right) }{\beta  \exp\left(\sum_{i=1}^d y_i\right) +1}\cdot \sum_{i=1}^d  \frac{\left(\frac{1-\beta \gamma}{e^{y_i}-\beta}-\gamma\right)^{\frac{D+1}{2D}} }{\frac{1-\beta \gamma}{e^{y_i}-\beta} }\\
&=\frac{\exp\left(\frac{D-1}{2D}\sum_{i=1}^d y_i \right)}{\beta  \exp\left(\sum_{i=1}^d y_i\right) +1}\cdot \sum_{i=1}^d f(y_i),
\end{align*}
where $f(y)= \left(\frac{1-\beta \gamma}{e^{y}-\beta}-\gamma\right)^{\frac{D+1}{2D}} (e^{y}-\beta)$.

It holds that
\begin{align*}
f'(y)
&= e^y \left(\frac{1-\beta \gamma}{e^{y}-\beta}-\gamma\right)^\frac{D+1}{2D} \left(1+\frac{\frac{D+1}{2D}(1-\beta\gamma)}{\gamma e^y-1}\right),\\
f''(y)
&=\frac{ e^y \left(\frac{1-\beta \gamma}{e^{y}-\beta}-\gamma\right)^\frac{D+1}{2D}}
  {4 D^2 (e^y-\beta) (\gamma e^y-1)^2 } \cdot g(y,D),
\end{align*}
  where
\begin{align*}
g(y,D)
&=e^y (\beta \gamma -1)^2 -2(1-\beta \gamma)(e^y-\beta)(1-\gamma e^y)D\\
&\quad\,- (2 \beta + 2 \beta^2 \gamma -e^y(1 +10 \beta \gamma +\beta^2 \gamma^2)+6\gamma e^{2y}(1+\beta \gamma)- 4 \gamma e^{3y})D^2.
\end{align*}
The fact $e^y\in (\beta, \frac{1}{\gamma})$ implies that the sign of $f''(y)$ is the same as that of $g(y,D)$. In the follow, we show that $g(y,D)$ is always negative.
The coefficient of $D$ in $g(y)$ is obviously negative
  given that $e^y\in (\beta, \frac{1}{\gamma})$. Now we show that the  coefficient of $D^2$ in $g(y)$ is also negative. To show this, the condition $e^y\in (\beta, \frac{1}{\gamma})$ is not sufficient.  We substitute $y_i= \ln(\frac{\beta x_i +1} {x_i +\gamma})$ back and recall that $x_i\in(0,1)$, we have
\begin{align*}
&\quad\,\, 2 \beta + 2 \beta^2 \gamma -e^y(1 +10 \beta \gamma +\beta^2 \gamma^2) +6\gamma e^{2y}(1+\beta \gamma)- 4 \gamma e^{3y}\\
&= \frac{(\beta\gamma -1)^2}{(\gamma +x)^3}\cdot(\gamma^2-x^2+ \gamma(1-\beta \gamma) x+3 \gamma x + 4 \beta \gamma x^2 + \beta x^3)
>0.
\end{align*}

Since both the coefficients of $D$ and $D^2$ are negative, we can choose $D=1$, in which case,
\[f''(y)=-\gamma e^y <0.\]
Denote that $\bar{y}=\frac{1}{d}\sum_{i=1}^d y_i$. Due to the Jensen's Inequality, $\sum_{i=1}^d f(y_i)\le df(\bar{y})$.
Therefore,
\begin{align*}
\alpha(d;x_1,\ldots,x_d)
&=\frac{\exp\left(\frac{D-1}{2D}\sum_{i=1}^d y_i \right)}{\beta  \exp\left(\sum_{i=1}^d y_i\right) +1}\cdot \sum_{i=1}^d f(y_i)\\
&\le \frac{\exp\left(\frac{d(D-1)}{2D}\bar{y} \right)}{\beta  \exp\left(d\bar{y}\right) +1}\cdot d f(\bar{y}).
\end{align*}
Let $x$ satisfy that $\bar{y}= \ln(\frac{\beta x +1} {x +\gamma})$, i.e.~$x= \frac{1-\beta \gamma}{e^{\bar{y}}-\beta}-\gamma$. It is then easy to verify that $x\in(0,1]$ since all $x_i\in(0,1]$ and $y_i= \ln(\frac{\beta x_i +1} {x_i +\gamma})$ is monotone with respect to $x_i$. Therefore, $\alpha(d;x_1,\ldots,x_d)$ is maximized when all $x_i=x\in(0,1]$.
\end{proof}

We then deal with the symmetric case.
Let
\begin{align*}
\alpha(d,x)
&=\alpha(d;\underbrace{x,\ldots,x}_{d})
=
\frac{d(1-\beta\gamma)x^{\frac{D+1}{2D}}(\beta x+1)^{\frac{d(D-1)}{2D}}}{
(x+\gamma)^{1+\frac{d(D-1)}{2D}}\left(\beta\left(\frac{\beta x+1}{x+\gamma}\right)^d+1\right)}.
\end{align*}
Let $f(x)=\left(\frac{\beta x+1}{x+\gamma}\right)^d$ be the symmetric version of the recursion \eqref{eq:recursion}. Observe that $\alpha(d,x)=\frac{\Phi(x)}{\Phi(f(x))}|f'(x)|$, which is exactly the amortized decay ratio in the symmetric case. 

Recall the formal definitions of $D$ and $X$ in Definition \ref{definition-D-X} in Section \ref{app-Gamma-D}. 
Our main discovery is the following lemma which states that at the uniqueness threshold $\gamma=\Gamma(\beta)$, the value of  $\alpha(d,x)$ is maximized at $d=D$ and $x=X$ with $\alpha(D,X)=1$. It is in debt to the magic of the potential method to observe such a harmoniously beautiful coincidence between amortized correlation decay and phase transition of uniqueness. 


\begin{lemma}\label{lemma-symmetric-extreme}
Let $0\le\beta<1$ and $\gamma=\Gamma(\beta)$. It holds that $\sup_{d\ge1\atop 0<x\le1}\alpha(d,x)=\alpha(D, X)=1$.
\end{lemma}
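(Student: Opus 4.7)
The plan is to verify $\alpha(D,X) = 1$ first, then establish the upper bound $\alpha(d,x) \le 1$ by a combination of critical-point analysis at $(D,X)$ and boundary estimates. The key observation underlying everything is the factorization
\[
\alpha(d,x) \;=\; |f_d'(x)|\cdot \frac{\Phi(x)}{\Phi(f_d(x))}, \qquad f_d(x)=\left(\frac{\beta x+1}{x+\gamma}\right)^{\!d},
\]
already noted in the excerpt; the potential $\Phi$ is precisely tuned to the critical data $(D,X,\Gamma)$.

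For the equality, at $\gamma=\Gamma$, $d=D$, $x=X$ the fixed-point identity $X=f_D(X)$ from Definition \ref{definition-D-X} makes $\Phi(X)/\Phi(f_D(X))=1$, so $\alpha(D,X)=|f_D'(X)|$. The defining equation $\Gamma=\gamma(D)$, i.e.\ equation \eqref{eq:fix-point-2}, asserts exactly $|f_D'(X)|=D(1-\beta\Gamma)X/((\beta X+1)(X+\Gamma))=1$. So $\alpha(D,X)=1$.

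For the upper bound, I would verify that $(D,X)$ is a critical point of $\ln\alpha(d,x)$ by explicit differentiation. Setting $\partial_x\ln\alpha|_{(D,X)}=0$ should reduce, after cancellations that use $f_D(X)=X$, to the identity $\ln((\beta X+1)/(X+\Gamma))=2(\beta X+1)/((D+1)(\beta X+1)-2D)$ from Lemma \ref{lemma-fix-point-id}(2), which holds by hypothesis. For $\partial_d \ln\alpha|_{(D,X)}=0$, I would use the stationarity $\deriv{\gamma(d)}{d}|_{d=D}=0$ from Lemma \ref{lemma-Gamma-gamma}(3) together with the second form of the same identity. These verifications are direct calculations once the substitutions are arranged carefully.

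The main obstacle is promoting a local critical-point statement to the \emph{global} supremum claim. My plan for this is to dispatch the domain boundary by observing that $\alpha\to 0$ as $x\to 0^+$ (because of the factor $x^{(D+1)/(2D)}$), and that the crude bound $\alpha(d,x)\le d/\gamma^{d(D-1)/(2D)}$ (mirroring \eqref{eq:step-1} in the symmetric case) forces $\alpha\to 0$ as $d\to\infty$ since $\gamma>1$; this confines any maximizer to a compact interior region. It then suffices to rule out additional interior critical points exceeding $1$. For this I intend to mirror the strategy in the proof of Lemma \ref{lemma-Gamma-gamma}: reduce the pair of critical equations in $(d,x)$, via the implicit function $x^*(d)$ chosen to be the fixed point of $f_d$, to a single-variable problem whose value $|f_d'(x^*(d))|$ equals the standard uniqueness quantity bounded by $\gamma(d)/\Gamma\le 1$, with equality only at $d=D$. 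The delicate part of this last reduction is showing that the optimal $x$ for a given $d$ indeed coincides with the fixed point of $f_d$; I would argue this using the monotonicity structure of $x(\gamma,d)$ and of $d(1-\beta\gamma)x/((\beta x+1)(x+\gamma))$ established in Lemma \ref{lemma-Gamma-gamma}.
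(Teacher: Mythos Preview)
Your verification that $\alpha(D,X)=1$ is correct and matches the paper. The identification of $(D,X)$ as a critical point via Lemma~\ref{lemma-fix-point-id} is also fine. The boundary analysis is sound.

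The genuine gap is in your reduction for the global bound. You propose to show that for each fixed $d$ the optimizer in $x$ is the fixed point $x^*(d)$ of $f_d$, so that $\sup_x\alpha(d,x)=|f_d'(x^*(d))|\le 1$ by the definition of $\Gamma$. This claim is false. Already for $\beta=0$ one computes directly
\[
\partial_x\ln\alpha(d,x)=\frac{D+1}{2Dx}-\frac{1+\tfrac{d(D-1)}{2D}}{x+\Gamma},
\]
so the $x$-critical point for fixed $d$ is $x_{\mathrm{opt}}(d)=\dfrac{(D+1)\Gamma}{(D-1)(d+1)}$, whereas the fixed point $x^*(d)$ satisfies $x(x+\Gamma)^d=1$. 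These agree at $d=D$ (both equal $X=\Gamma/(D-1)$, consistent with \eqref{eq:fix-point-2}) but diverge for $d\ne D$. So the fixed-point curve is not the locus of $x$-optima, and evaluating $\alpha$ along it does not give $\sup_x\alpha(d,x)$. The monotonicity facts in Lemma~\ref{lemma-Gamma-gamma} concern $x(\gamma,d)$ as a function of $\gamma$ and do not salvage this.

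The paper's route avoids this trap by optimizing in the other order. Writing $z=\big(\tfrac{\beta x+1}{x+\Gamma}\big)^d$, one has $\alpha(d,x)=C_1(x)\cdot dz^{(D-1)/(2D)}/(\beta z+1)$; differentiating in $d$ shows the sign of $\partial_d\alpha$ equals the sign of $g(z)=2D(\beta z+1)-\big((D+1)(\beta z+1)-2D\big)\ln z$, which is strictly increasing in $z$ with $g(X)=0$ by Lemma~\ref{lemma-fix-point-id}. Hence for each fixed $x$ the maximum over $d$ occurs at $d=\rho(x):=\ln X/\ln\big(\tfrac{\beta x+1}{x+\Gamma}\big)$, i.e.\ where $f_d(x)=X$ (a constant, not $x$). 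Substituting $z=X$ leaves a one-variable function $\alpha(\rho(x),x)$ whose derivative in $x$ has sign opposite to a monotone function $h(x)$ with $h(X)=0$ (again by Lemma~\ref{lemma-fix-point-id}), yielding the unique maximum at $x=X$, $\rho(X)=D$. The key structural point you missed is that the first-order condition in $d$ pins down $f_d(x)=X$, not $f_d(x)=x$.
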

\begin{proof}
It is not difficult to verify that $\alpha(D, X)=1$. Note that \eqref{eq:fix-point-1} and \eqref{eq:fix-point-2} hold for $\gamma=\Gamma, d=D$ and $x= X$. Then
\begin{align*}
\alpha(D, X)
&=
\frac{D(1-\beta\Gamma)\left(\frac{\beta X+1}{ X+\Gamma}\right)^D}{(\beta X+1)( X+\Gamma)}\cdot\left(\frac{ X}{\left(\frac{\beta X+1}{ X+\Gamma}\right)^D}\right)^{\frac{D+1}{2D}}\cdot\left(\frac{\beta X+1}{\beta\left(\frac{\beta X+1}{ X+\Gamma}\right)^D+1}\right)\\
&=\frac{D(1-\beta\Gamma) X}{(\beta X+1)( X+\Gamma)}\left(\frac{ X}{ X}\right)^{\frac{D+1}{2D}}\left(\frac{\beta X+1}{\beta X+1}\right)\\
&=1.
\end{align*}

We then show that $\sup_{d\ge1\atop 0<x\le1}\alpha(d,x)=\alpha(D, X)$. For the rest of the proof, we assume that $0\le\beta<1$, $d\ge 1$, and $0<x\le1$. Due to lemma \ref{lemma-well-define-gamma}, $1<\Gamma(\beta)<\frac{1}{\beta}$. And we know that $ X=x(\Gamma,D)\in(0,1)$.

Denote that $z=\left(\frac{\beta x+1}{x+\Gamma}\right)^d$. Then $\alpha(d,x)$ can be rewritten as $\alpha(d,x)=C_1\cdot\frac{dz^{\frac{D-1}{2D}}}{\left(\beta z+1\right)}$,
where $C_1=\frac{1-\beta\Gamma}{x+\Gamma}x^{\frac{D+1}{2D}}>0$ is independent of $d$. Thus,
\begin{align*}
\frac{\partial \alpha(d,x)}{\partial d}
&=
\frac{C_1 z^{\frac{D-1}{2D}}}{2D\left(\beta z+1\right)^2}\cdot g(z),
\end{align*}
where the function $g(z)$ is defined as
\[
g(z)=2D\left(\beta z+1\right)-\left((D+1)(\beta z+1)-2D\right)\ln z.
\] 
It is obvious that $\frac{C_1z^{\frac{D-1}{2D}}}{2D\left(\beta z+1\right)^2}>0$, thus the sign of $\pderiv{\alpha(d,x)}{d}$ is governed by $g(z)$.
Note that $0<z=\left(\frac{\beta x+1}{x+\Gamma}\right)^d<1$. Then
\begin{align*}
\deriv{g(z)}{z}
&=\frac{1}{z}\left((D-1)(\beta z + 1) - (D+1)\beta z\ln z\right)
>(D-1)(\beta z+1)
\ge 0
\end{align*}
Therefore, $g(z)$ is strictly increasing with respect to $z$.
Due to Lemma \ref{lemma-fix-point-id}, it holds that $\ln\left(\frac{\beta  X+1}{ X+\Gamma}\right)=\frac{2(\beta X+1)}{(D+1)(\beta X+1)-2D}$. Thus,
\begin{align*}
g( X)
&= 2D\left(\beta X+1\right)-D\left((D+1)(\beta X+1)-2D\right)\ln\left(\frac{\beta  X+1}{ X+\Gamma}\right)
=0.
\end{align*}
Therefore, $\pderiv{\alpha(d,x)}{d}<0$ when $z< X$; $\pderiv{\alpha(d,x)}{d}=0$ when $z= X$; and $\pderiv{\alpha(d,x)}{d}>0$ when $z> X$. Note that $z=\left(\frac{\beta x+1}{x+\Gamma}\right)^d$ is monotonically decreasing with respect to $d$ since $\left(\frac{\beta x+1}{x+\Gamma}\right)<\frac{1}{\Gamma}<1$. Let $\rho(x)=\frac{\ln X}{\ln(\beta x+1)-\ln(x+\Gamma)}$. It is then easy to verify that
\begin{align*}
\pderiv{\alpha(d,x)}{d}
&\begin{cases}
<0 & \text{if }d>\rho(x),\\
=0 & \text{if }d=\rho(x),\\
>0 & \text{if }d<\rho(x).
\end{cases}
\end{align*}
Therefore, for any $d$ and $x$, $\alpha(d,x)\le\alpha(\rho(x),x)$.

Recall that $\alpha(d,x)=\frac{(1-\beta\Gamma)x^{\frac{D+1}{2D}}}{x+\Gamma}\cdot\frac{dz^{\frac{D-1}{2D}}}{\left(\beta z+1\right)}$,
where $z=\left(\frac{\beta x+1}{x+\Gamma}\right)^d$.
When $d=\rho(x)=\frac{\ln X}{\ln(\beta x+1)-\ln(x+\Gamma)}$, it holds that $z=\left(\frac{\beta x+1}{x+\Gamma}\right)^d= X$. Therefore,
\begin{align*}
\alpha(\rho(x),x)
&= C_2\cdot\frac{x^{\frac{D+1}{2D}}}{(x+\Gamma)(\ln(\beta x+1)-\ln(x+\Gamma))},
\end{align*}
where $C_2=\frac{(1-\beta\Gamma) X^{\frac{D-1}{2D}}\ln X}{\left(\beta X+1\right)}$ is independent of $x$, and $C_2<0$ since $0< X<1$.
\begin{align*}
\deriv{\alpha(\rho(x),x)}{x}
&= \frac{C_2x^{\frac{-D+1}{2D}}\cdot h(x)}{2D(x+\Gamma)^2(\beta x+1)\left(\ln\left(\frac{\beta x+1}{x+\Gamma}\right)\right)^2},
\end{align*}
where 
\begin{align*}
h(x)
&=
2D(1-\beta\Gamma)x-(\beta x+1)(2D x-(D+1)(x+\Gamma))\cdot \ln\left(\frac{\beta x+1}{x+\Gamma}\right).
\end{align*}
It is easy to see that $ \frac{C_2x^{\frac{-D+1}{2D}}}{2D(x+\Gamma)^2(\beta x+1)\left(\ln\left(\frac{\beta x+1}{x+\Gamma}\right)\right)^2}<0$ and $h(x)$ is monotonically increasing. Due to Lemma \ref{lemma-fix-point-id}, $\ln\left(\frac{\beta  X+1}{ X+\Gamma}\right)=\frac{2D(1-\beta\Gamma) X}{(\beta X+1)(2D X-(D+1)( X+\Gamma))}$, thus $h( X)=0$.
Therefore, $\deriv{\alpha(\rho(x),x)}{x}$ is monotonically decreasing with respect to $x$ and $\left.\deriv{\alpha(\rho(x),x)}{x}\right|_{x= X}=0$, which implies that for any $x$, $\alpha(\rho(x),x)\le\alpha(\rho( X), X)$.

Due to \eqref{eq:fix-point-1}, it holds that $ X=\left(\frac{\beta X+1}{ X+\Gamma}\right)^D$, thus
$\rho( X)=
\frac{\ln X}{\ln\left(\frac{\beta X+1}{ X+\Gamma}\right)}
=D$, hence $\alpha(\rho( X), X)=\alpha(D, X)$.

In conclusion, assuming $0\le\beta<1$ and $\gamma=\Gamma(\beta)$, for any $d\ge1$ and $0<x\le 1$, it holds that
\begin{align*}
\alpha(d,x)
&\le\alpha(\rho(x),x)
\le\alpha(\rho( X), X)
=\alpha(D, X)
=1.
\end{align*}
\end{proof}

As a consequence of the above lemma, a strict upper bound is obtained as follows.

\begin{lemma}\label{lemma-symmetric-decay}
For $0\le\beta<1$ and $\Gamma(\beta)<\gamma<\frac{1}{\beta}$,  there exists a constant $\alpha<1$ such that for any $d\ge1$ and $0<x\le1$, it holds that $\alpha(d,x)\le \alpha$.
\end{lemma}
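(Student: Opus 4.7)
The plan is to combine a compactness reduction for the domain of $(d,x)$ with a strict monotonicity of $\alpha(d,x;\gamma)$ in $\gamma$, using Lemma~\ref{lemma-symmetric-extreme} as the anchor at the boundary $\gamma=\Gamma(\beta)$. For the compactness step, observe that $\beta\gamma<1$ implies $(\beta x+1)/(x+\gamma)\le 1/\gamma$ for every $x\ge 0$, hence $z:=((\beta x+1)/(x+\gamma))^d\le \gamma^{-d}$. Substituting this together with the trivial bound $\beta z+1\ge 1$ into \eqref{eq:alpha} gives
\[
\alpha(d,x)\;\le\;\frac{(1-\beta\gamma)\,x^{(D+1)/(2D)}}{x+\gamma}\cdot d\,\gamma^{-d(D-1)/(2D)}.
\]
Since $\gamma>1$ and $(D-1)/(2D)>0$, the right-hand side vanishes as $d\to\infty$ uniformly in $x\in(0,1]$ and as $x\to 0^+$ uniformly in $d\ge 1$. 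Hence there exist $N<\infty$ and $\epsilon>0$ (depending on $\beta,\gamma$) such that $\alpha(d,x)\le 1/2$ outside the compact box $K:=[1,N]\times[\epsilon,1]$.

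The second step establishes pointwise strict inequality $\alpha(d,x;\gamma)<1$ on $K$ via monotonicity in $\gamma$ for each fixed $(d,x)$. Noting that $D$ depends only on $\beta$, differentiating $\ln\alpha(d,x;\gamma)$ in $\gamma$ yields
\[
\pderiv{\ln\alpha}{\gamma}=-\frac{\beta}{1-\beta\gamma}-\frac{1}{x+\gamma}\!\left[1+\frac{d(D-1)}{2D}-\frac{\beta dz}{\beta z+1}\right].
\]
The bound $z\le 1/\gamma$ (valid for $d\ge 1$) gives $\beta z/(\beta z+1)\le \beta/(\beta+\gamma)$, while Lemma~\ref{lemma-fix-point-id} provides the structural bound $(D-1)/(D+1)\ge\sqrt{\beta\Gamma}$; this rearranges to $(D-1)/(2D)\ge\sqrt{\beta\Gamma}/(1+\sqrt{\beta\Gamma})$, which in turn is $\ge \beta/(\beta+\gamma)$ since $\gamma\ge\Gamma>\sqrt{\beta/\Gamma}$. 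Hence the bracketed expression is at least $1>0$, so $\partial_\gamma\ln\alpha<0$ throughout $\gamma\in(\Gamma(\beta),1/\beta)$ for every $(d,x)$ in the domain. Combined with Lemma~\ref{lemma-symmetric-extreme}, this yields $\alpha(d,x;\gamma)<\alpha(d,x;\Gamma(\beta))\le 1$ pointwise.

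The proof then concludes by compactness: the continuous function $\alpha(\cdot,\cdot;\gamma)$ is strictly below $1$ at every point of the compact set $K$, so $\alpha':=\max_K\alpha<1$; together with the tail bound from the first step, $\alpha(d,x;\gamma)\le\max\{\alpha',1/2\}<1$ uniformly in $(d,x)$, as required. The main obstacle is the sign of $\partial_\gamma\ln\alpha$ in the second step: the positive term $\beta dz/(\beta z+1)$ grows with $d$ and threatens to dominate, and the cancellation goes through only because $d(D-1)/(2D)$ grows at the same linear-in-$d$ rate with a coefficient $(D-1)/(2D)$ kept safely above $\beta/(\beta+\gamma)$ by the quantitative bound $(D-1)/(D+1)\ge\sqrt{\beta\Gamma}$ from Lemma~\ref{lemma-fix-point-id}, which is where the uniqueness condition enters quantitatively.
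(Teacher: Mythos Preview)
Your proposal is correct and follows essentially the same approach as the paper: both establish that $\alpha(d,x)$ is strictly decreasing in $\gamma$ using the bound from Lemma~\ref{lemma-fix-point-id}, invoke Lemma~\ref{lemma-symmetric-extreme} at $\gamma=\Gamma(\beta)$, and finish with a compactness argument. The only cosmetic differences are that the paper factors $\alpha$ and differentiates one factor directly rather than taking the log-derivative, and phrases the compactness step as ``either the supremum is attained at some finite $(d^*,x^*)$ or it equals $0$'' instead of your explicit tail bound outside a box.
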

\begin{proof}
Let $\alpha_{\beta,\gamma}=\sup_{d\ge 1\atop0<x\le1}\alpha(d,x)$. Note that $\alpha_{\beta,\gamma}$ is a constant independent of $d$ and $x$. And $\alpha_{\beta,\Gamma}=1$ due to Lemma \ref{lemma-symmetric-extreme}.

We then show that  $\alpha_{\beta,\gamma}<\alpha_{\beta,\Gamma}$ for $\Gamma<\gamma<\frac{1}{\beta}$.
In particular, we first show that for any $d\ge1$ and $0<x\le 1$, $\alpha(d,x)$ is strictly decreasing with respect to $\gamma$ over $\gamma\in(\Gamma,\frac{1}{\beta})$.
\begin{align*}
\alpha(d,x)
&=C_3\cdot\left(\frac{1-\beta\gamma}{x+\gamma}\right)\cdot\frac{(x+\gamma)^\frac{d(D+1)}{2D}}{\left(\beta(\beta x+1)^d+(x+\gamma)^d\right)},
\end{align*}
where $C_3=dx^{\frac{d(D+1)}{2D}}(\beta x+1)^{\frac{d(D-1)}{2D}}>0$ is independent of $\gamma$. Let $h(\gamma)=\frac{(x+\gamma)^\frac{d(D+1)}{2D}}{\left(\beta(\beta x+1)^d+(x+\gamma)^d\right)}$.
\begin{align*}
\deriv{h(\gamma)}{\gamma}
&=
\frac{d(D+1)(x+\gamma)^{\frac{d(3D+1)}{2D}-1}}{2D\left(\beta(\beta x+1)^d+(x+\gamma)^d\right)^2}
\cdot\left(\beta\left(\frac{\beta x+1}{x+\gamma}\right)-\frac{D-1}{D+1}\right)\\
&<\frac{d(D+1)(x+\gamma)^{\frac{d(3D+1)}{2D}-1}}{2D\left(\beta(\beta x+1)^d+(x+\gamma)^d\right)^2}
\left(\frac{\beta}{\Gamma}-\frac{D-1}{D+1}\right)\\
&<0,
\end{align*}
where the second to the last inequality holds because $x>0$ and $\gamma>\Gamma$, and the last inequality is due to Lemma \ref{lemma-fix-point-id}. The fact  that $\deriv{h(\gamma)}{\gamma}<0$ implies that $h(\gamma)$ is strictly decreasing.
Thus, $\alpha(d,x)=C_3\cdot\left(\frac{1-\beta\gamma}{x+\gamma}\right)\cdot h(\gamma)$ is strictly decreasing with respect to $\gamma$ over $\gamma\in(\Gamma,\frac{1}{\beta})$.

Let $\alpha_\gamma(d,x)$ denote the $\alpha(d,x)$ with parameter $\gamma$.
We can assume that there exist finite $d^*\ge 1$ and  constant $0<x^*\le1$ achieving that $\alpha_\gamma(d^*,x^*)=\sup_{d\ge1\atop0<x\le1}\alpha_\gamma(d,x)$, since if otherwise it would hold that $\sup_{d\ge1\atop0<x\le1}\alpha_\gamma(d,x)$ is achieved by either $d\rightarrow\infty$ or $x\rightarrow0$, but in either case it is easy to verify that $\alpha_\gamma(d,x)\rightarrow 0$, thus $\alpha_{\beta,\gamma}=\sup_{d\ge1\atop0<x\le1}\alpha_\gamma(d,x)=0$ and we are done. Since $\alpha(d,x)$ is strict decreasing with respect to $\gamma$, it holds that $\alpha_\gamma(d^*,x^*)<\alpha_\Gamma(d^*,x^*)$ for any $\gamma\in(\Gamma,\frac{1}{\beta})$  Therefore,
\begin{align*}
\alpha_{\beta,\gamma}
&=
\sup_{d\ge1\atop0<x\le1}\alpha_\gamma(d,x)
=
\alpha_\gamma(d^*,x^*)
<
\alpha_\Gamma(d^*,x^*)
\le
\sup_{d\ge1\atop0<x\le1}\alpha_\Gamma(d,x)
=\alpha_{\beta,\Gamma}
=1.
\end{align*}
\end{proof}

Combining Lemma \ref{lemma-symmetrization} and Lemma \ref{lemma-symmetric-decay}, we have the following lemma which bounds the amortized correlation decay when the uniqueness is satisfied.
}
\begin{lemma}\label{lemma-alpha}
Let $\alpha(d;x_1,\ldots,x_d)$ be defined by \eqref{eq:alpha}.
For $0\le\beta<1$ and $\Gamma(\beta)<\gamma<\frac{1}{\beta}$,  there exists a constant $\alpha<1$ which depends only on $\beta$ and $\gamma$, such that for any $d\ge1$ and $x_i\in(0,1]$, $i=1,2,\ldots,d$, it holds that $\alpha(d;x_1,\ldots,x_d)\le \alpha$
\end{lemma}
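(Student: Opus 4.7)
The plan is to reduce the multivariate supremum to a one-variable optimization via symmetrization, pin down the threshold case $\gamma=\Gamma(\beta)$, and then propagate a strict inequality via monotonicity in $\gamma$.

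First, I would establish a \emph{symmetrization lemma}: for any fixed $d\ge 1$, the supremum of $\alpha(d;x_1,\dots,x_d)$ over $(0,1]^d$ is attained at a symmetric point $x_1=\dots=x_d=x$. The substitution $y_i=\ln\left(\frac{\beta x_i+1}{x_i+\gamma}\right)$ is a monotone reparametrization of $(0,1]$, and under it $\alpha$ rewrites as
\[
\frac{\exp\!\left(\frac{D-1}{2D}\sum_{i=1}^d y_i\right)}{\beta\exp\!\left(\sum_{i=1}^d y_i\right)+1}\cdot\sum_{i=1}^d f(y_i),
\]
where $f(y)=\left(\frac{1-\beta\gamma}{e^y-\beta}-\gamma\right)^{\frac{D+1}{2D}}(e^y-\beta)$. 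Since the first factor depends only on $\sum y_i$, Jensen's inequality applied to $f$ suffices; a direct computation of $f''$ factors out a polynomial $g(y,D)$ whose coefficients of $D$ and $D^2$ are both negative when $x_i\in(0,1]$ (using $e^{y}\in(\beta,1/\gamma)$ and re-substituting $x_i$ to handle the quadratic term), giving concavity.

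Next I analyze the symmetric case $\alpha(d,x):=\alpha(d;x,\dots,x)$. The crucial claim is that at $\gamma=\Gamma(\beta)$,
\[
\sup_{d\ge 1,\;0<x\le 1}\alpha(d,x)=\alpha(D,X)=1.
\]
The value $\alpha(D,X)=1$ follows from direct substitution using the two defining relations $X=\left(\frac{\beta X+1}{X+\Gamma}\right)^D$ and $D(1-\beta\Gamma)X=(\beta X+1)(X+\Gamma)$. To see this is a global maximum I would fix $x$ and optimize over $d$: writing $z=\left(\frac{\beta x+1}{x+\Gamma}\right)^d$, the sign of $\partial\alpha/\partial d$ is governed by $g(z)=2D(\beta z+1)-\big((D+1)(\beta z+1)-2D\big)\ln z$, which is strictly increasing in $z$ and vanishes at $z=X$ thanks to the identity $\ln\!\left(\frac{\beta X+1}{X+\Gamma}\right)=\frac{2(\beta X+1)}{(D+1)(\beta X+1)-2D}$ from Lemma~\ref{lemma-fix-point-id}. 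Hence the optimum in $d$ lies at $\rho(x):=\ln X/\ln\!\left(\frac{\beta x+1}{x+\Gamma}\right)$. Substituting back gives a purely one-dimensional function of $x$ whose derivative, by the second identity of Lemma~\ref{lemma-fix-point-id}, vanishes at $x=X$, completing the threshold case.

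Finally, to extend from $\gamma=\Gamma$ to $\gamma>\Gamma$ with strict inequality, I would show $\alpha(d,x)$ is strictly decreasing in $\gamma$ on $(\Gamma,1/\beta)$. Differentiating the representation
\[
\alpha(d,x)=dx^{\frac{d(D+1)}{2D}}(\beta x+1)^{\frac{d(D-1)}{2D}}\cdot\frac{1-\beta\gamma}{x+\gamma}\cdot\frac{(x+\gamma)^{\frac{d(D+1)}{2D}}}{\beta(\beta x+1)^d+(x+\gamma)^d}
\]
produces a factor $\beta\cdot\frac{\beta x+1}{x+\gamma}-\frac{D-1}{D+1}$ which is strictly negative because $\frac{\beta x+1}{x+\gamma}\le 1/\gamma<1/\Gamma$ and $\beta/\Gamma<\frac{D-1}{D+1}$ by Lemma~\ref{lemma-fix-point-id}. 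Since $\alpha(d,x)\to 0$ as $d\to\infty$ or $x\to 0$, the supremum over $(d,x)$ is either attained at some finite $(d^*,x^*)$, where strict monotonicity in $\gamma$ gives $\alpha_{\beta,\gamma}<\alpha_{\beta,\Gamma}=1$, or it is zero outright. Combining with the symmetrization step gives the stated constant bound.

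The main obstacle will be the threshold identification: verifying that $\sup_{d,x}\alpha(d,x)$ equals exactly $1$ at the single pair $(D,X)$ is a tight algebraic coincidence between the uniqueness condition and the amortized recursion, and requires using both logarithmic-derivative identities of Lemma~\ref{lemma-fix-point-id} to collapse otherwise intractable expressions. The symmetrization and the $\gamma$-monotonicity steps are largely mechanical once the potential $\Phi(R)=R^{(D+1)/(2D)}(\beta R+1)$ is chosen, but the fact that this potential is the right one is exactly what makes the threshold analysis work.
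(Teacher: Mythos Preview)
Your proposal is correct and follows essentially the same route as the paper: the paper's proof of Lemma~\ref{lemma-alpha} is precisely the combination of Lemma~\ref{lemma-symmetrization} (the Jensen/concavity symmetrization via the $y_i=\ln\frac{\beta x_i+1}{x_i+\gamma}$ substitution), Lemma~\ref{lemma-symmetric-extreme} (optimizing $\alpha(d,x)$ first in $d$ via the variable $z=(\frac{\beta x+1}{x+\Gamma})^d$ and the identity of Lemma~\ref{lemma-fix-point-id}, then in $x$), and Lemma~\ref{lemma-symmetric-decay} (strict monotonicity in $\gamma$ using $\beta/\Gamma<\frac{D-1}{D+1}$). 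One minor slip: in your $\gamma$-derivative representation the exponent of $x$ in the constant factor should be $\frac{D+1}{2D}$, not $\frac{d(D+1)}{2D}$, and the sign-controlling factor is really $\beta\big(\frac{\beta x+1}{x+\gamma}\big)^d-\frac{D-1}{D+1}$; neither affects the argument since $\big(\frac{\beta x+1}{x+\gamma}\big)^d\le\frac{1}{\gamma}<\frac{1}{\Gamma}$ still gives negativity.
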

\begin{proofsketch}
By Jensen's inequality, $\alpha(d;x_1,\ldots,x_d)$ is upper bounded by its symmetric form, when all $x_i=x$. Denote this symmetric form as $\alpha(d,x)=\alpha(d;x,\ldots,x)$, and the symmetric version of the recursion \eqref{eq:recursion} as $f(x)=\left(\frac{\beta x+1}{x+\gamma}\right)^d$. Observe that $\alpha(d,x)=\frac{\Phi(x)}{\Phi(f(x))}|f'(x)|$, which is exactly the amortized decay ratio in the symmetric case. Denote by $X$ the fixed point $X=f(X)$ at $d=D$ and $\gamma=\Gamma(\beta)$. Our main discovery is that at the uniqueness threshold $\gamma=\Gamma(\beta)$, $\alpha(d,x)$ is maximized at $d=D$ and $x=X$ with $\alpha(D,X)=1$. It is in debt to the magic of the potential method to discover such a harmoniously beautiful coincidence between amortized correlation decay and phase transition of uniqueness. The lemma follows by observing that $\alpha(d,x)$ is strictly decreasing with respect to $\gamma$.

The formal proof can be found in the full version of the paper.
\end{proofsketch}

The following lemma bounds the amortized correlation decay with respect to the refined metric of $M$-based depth.

\begin{lemma}\label{lemma-potentialized-decay}
Assume that $0\le\beta<1$ and $\Gamma(\beta)<\gamma<\frac{1}{\beta}$.
There exist constants $\alpha<1$ and $M>1$ which depend only on $\beta$ and $\gamma$, for every vertex $v\in B_M(L)$, assuming that $v$ has $d_0$ children fixed to be blue, $d_1$ children fixed to be green, and $d$ free children $v_1,\ldots,v_d$, it holds that
\begin{align}
\epsilon_v
&\le
M \alpha^{\lceil \log_M (d_0+d_1+d+1) \rceil-1};\label{eq:basis}\\
\epsilon_v
&\le  \alpha^{\lceil \log_M (d_0+d_1+d+1) \rceil} \cdot  \max_{1\le i\le d}\left\{\epsilon_{v_i}\right\}.\label{eq:one-step}
\end{align}
\end{lemma}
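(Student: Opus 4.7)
The plan is to split into two regimes according to the size of $N := d_0+d_1+d+1$, and write $k := \lceil\log_M N\rceil$ so that the target bounds become $\epsilon_v\le M\alpha^{k-1}$ and $\epsilon_v\le \alpha^{k}\max_i\epsilon_{v_i}$. The constant $\alpha<1$ is the one furnished by Lemma \ref{lemma-alpha}; $M$ will then be chosen (depending only on $\beta,\gamma,\alpha,D$) large enough to close the argument.

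\emph{Low-degree regime} $N\le M$, i.e.\ $k=1$: For \eqref{eq:basis}, the bootstrap bound \eqref{eq:epsilon-bootstrap} combined with $\gamma>1$ and $d\le N-1<M$ gives immediately $\epsilon_v\le d/\gamma^{(d_0+d_1+d)(D-1)/(2D)}\le d<M=M\alpha^{k-1}$. For \eqref{eq:one-step}, the amortized-decay bound \eqref{eq:amortized-decay} combined with Lemma \ref{lemma-alpha} (applied to $\widetilde{R_i}\in(0,1]$) gives directly $\epsilon_v\le \alpha\max_i\epsilon_{v_i}=\alpha^{k}\max_i\epsilon_{v_i}$.

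\emph{High-degree regime} $N>M$, i.e.\ $k\ge 2$: In both inequalities I would instead use the bootstrap bounds \eqref{eq:epsilon-bootstrap} and \eqref{eq:step-1} and exploit the exponential decay of $\gamma^{-(d_0+d_1+d)(D-1)/(2D)}$. Since $k-1\le\log_M N$ and $\alpha\in(0,1)$, one has $\alpha^{k-1}\ge \alpha^{\log_M N}=N^{\log_M\alpha}$. Substituting into the two target inequalities and using $d\le N$, \eqref{eq:basis} reduces to verifying $N^{1-\log_M\alpha}\le M\gamma^{(N-1)(D-1)/(2D)}$, and \eqref{eq:one-step} (after dividing by $\max_i\epsilon_{v_i}$) reduces to $N^{1-\log_M\alpha}\le \alpha\gamma^{(N-1)(D-1)/(2D)}$, for every $N\ge M+1$. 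Both are polynomial-versus-exponential comparisons in $N$.

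The main obstacle, though routine, is to choose a single $M$ that simultaneously validates both high-degree inequalities for every $N\ge M+1$. As $M\to\infty$, $\log_M\alpha\to 0^-$, so the left-hand exponent $1-\log_M\alpha$ tends to $1$, while the base $\gamma^{(D-1)/(2D)}>1$ on the right is independent of $M$; hence the right-hand side dominates uniformly in $N\ge M+1$ once $M$ is sufficiently large. The stricter of the two comparisons is the one for \eqref{eq:one-step} (multiplicative constant $\alpha$ instead of $M$), so it suffices to tune $M$ for that inequality alone; the resulting $M$ is the constant asserted in the lemma.
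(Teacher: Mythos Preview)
Your proof is correct and follows essentially the same approach as the paper: the same split into small and large total degree, using \eqref{eq:amortized-decay} with Lemma~\ref{lemma-alpha} for the small case and the crude bounds \eqref{eq:epsilon-bootstrap}, \eqref{eq:step-1} for the large case. The only cosmetic difference is that the paper packages the choice of $M$ into the single requirement $\frac{k}{\gamma^{k(D-1)/(2D)}}\le\alpha^{\lceil\log_M k\rceil}$ for all $k\ge M$ and leaves its feasibility implicit, whereas you unwind this into the explicit polynomial-versus-exponential comparison and argue directly that a sufficiently large $M$ works.
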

\begin{proof}
We choose $\alpha$ to be the one in Lemma \ref{lemma-alpha}, and $M>1$ to satisfy
\begin{align}
\frac{k}{\gamma^{k\frac{D-1}{2D}}}
\leq \alpha^{\lceil \log _Mk \rceil} \quad\text{for $k\ge M$}.\label{eq:M}
\end{align}

Due to \eqref{eq:epsilon-bootstrap},
\begin{align*}
\epsilon_v
&\le
\frac{d}{\gamma^{(d_0+d_1+d)\frac{D-1}{2D}}}
\le  M \alpha^{\lceil \log _M (d_0+d_1+d+1) \rceil-1},
\end{align*}
where the last inequality follows from \eqref{eq:M} if $d_0+d_1+d\geq M$ and the case $d_0+d_1+d < M$ is trivial since $\frac{d}{\gamma^{(d_0+d_1+d)\frac{D-1}{2D}}}<d\leq M$. Thus \eqref{eq:basis} is proved.

Due to \eqref{eq:amortized-decay},
$\epsilon_v\le\alpha(d; \widetilde{R_1},\ldots,\widetilde{R_d})\cdot \max_{1\le i\le d}\{\epsilon_{v_i}\}$
where $R_{v_i}\le\widetilde{R_i}\le R_{v_i}+\delta_{v_i}$. Since $v\in B_M(L)$, its children $v_i\in B_M^*(L)$. As we discussed in the beginning of this section, $0<R_{v_i}\le R_{v_i}+\delta_{v_i}\le 1$, thus $\widetilde{R_i}\in(0,1]$. Then due to Lemma \ref{lemma-alpha}, there is a constant $\alpha<1$,
\begin{align}
\epsilon_v \le \alpha(d; \widetilde{R_1},\ldots,\widetilde{R_d})\cdot \max_{1\le i\le d}\{\epsilon_{v_i}\}\le  \alpha\cdot  \max_{1\le i\le d}\{\epsilon_{v_i}\}. \label{eq:main}
\end{align}
Thus, \eqref{eq:one-step} holds trivially when $d_0+d_1+d< M$.
As for $d_0+d_1+d\ge M$, due to \eqref{eq:step-1},
\begin{align*}
\epsilon_v\le \frac{d}{\gamma^{(d_0+d_1+d)\frac{D-1}{2D}}}\cdot \max_i\{\epsilon_{v_i}\}
\le \alpha^{\lceil \log_M (d_0+d_1+d+1) \rceil} \cdot  \max_{1\le i\le d}\left\{\epsilon_{v_i}\right\}.
\end{align*}
Therefore, \eqref{eq:one-step} is proved.
\end{proof}

\noindent\textbf{Proof of Theorem \ref{thm:rank-decay}.}
%
%
We prove by structural induction in $B_M(L)$. The hypothesis is
 \[\forall v\in B_M(L),\quad \epsilon_v \leq M \alpha^{L-L_M(v)-1}.\]
For the basis, we consider those vertices $v\in B_M(L)$ whose children are in  $B_M^*(L)\setminus B_M(L)$.
The fact that the children of $v$ are not in $B_M(L)$ implies that $L_M(v)+\lceil\log _M (d_0+d_1+d+1)\rceil >L$, where $d_0+d_1+d$ is the number of children of $v$.
Then due to  \eqref{eq:basis} of Lemma \ref{lemma-potentialized-decay},
\begin{align*}
\epsilon_v \le M \alpha^{\lceil \log _M (d_0+d_1+d+1) \rceil-1} \le M \alpha^{L-L_M(v)-1}.
\end{align*}
The induction step is straightforward.
For every child $v_i$ of $v$, $L_M(v_i)=L_M(v)+\lceil \log_M (d_0+d_1+d+1) \rceil$. Suppose that the induction hypothesis is true for all $v_i$. Due to \eqref{eq:one-step} of Lemma \ref{lemma-potentialized-decay}, 
\begin{align*}
\epsilon_v
\le  \alpha^{\lceil \log_M (d_0+d_1+d+1) \rceil} \cdot  \max_{1\le i\le d}\left\{\epsilon_{v_i}\right\}
\le  \alpha^{\lceil \log_M (d_0+d_1+d+1) \rceil+L-L_M(v_i)-1}
= \alpha^{L-L_M(v)-1}.
\end{align*}
The hypothesis is proved. 

Finally, for the root $r$ of the tree,  $L_M(r)=0$, thus due to \eqref{eq:epsilon-delta}, there exists an $\widetilde{R}\in[R_r,R_r+\delta_r]\subseteq (0,1]$ such that 
\begin{align*}
\delta_r=\Phi(\widetilde{R})\cdot\epsilon_r 
\le \widetilde{R}^{\frac{D+1}{2D}}(\beta\widetilde{R}+1)\cdot M \alpha^{L-L_M(r)-1}
\le 2M \alpha^{L-1}.
\end{align*}
As we discussed in the beginning of this section, this implies Theorem \ref{thm:rank-decay}.


\ifabs{}{
\section{A tight analysis for $\beta=0$}\label{sec-beta=0}
In this section, we give a slightly improved and tight analysis (since we also have a hardness result) of the algorithm
when $\beta=0$. In the definition of $\Gamma(\beta)$, we take the maximum over all the possible real $d\geq 1$. As degrees of graphs, only those integer values have physical meanings and we also believe that the maximum value over all the integer $d$ gives the right boundary between tractable and hard. 
In the following, we show how to extend our result to integral $d$ for the special case of $\beta=0$.

Recall that $0\le\beta<1$, $\hat{x}$ satisfies $\hat{x} =\left(\frac{\beta \hat{x}+1}{\hat{x}+\gamma}\right)^d$. The integer version of $\Gamma(\beta)$ can be formally defined as
 $$\Gamma^*(\beta) =\inf\left\{\gamma\ge1 \Bigm{|} \forall d\in \{1,2,3, \ldots\}, \, \frac{d(1-\beta\gamma)(\beta \hat{x}+1)^{d-1}}{(\hat{x}+\gamma)^{d+1}}\le1\right\}.$$

For $\beta=0$, we can solve it and have that $\Gamma^*(0)=\max_{d\in \{1,2,3, \ldots\}} (d-1) d^{-\frac{d}{d+1}}$ .
It is easy to verify that $(d-1) d^{-\frac{d}{d+1}}$ is monotonously increasing when $d\leq 11$, decreasing when $d\geq 12$ and reaching the maximum when $d=11$. Therefore $\Gamma^*(0)=10 \cdot 11^{-\frac{11}{12}} $.

We notice that $\Gamma^*(0)=10 \cdot 11^{-\frac{11}{12}} \approx1.1101714$ and the continuous version $\Gamma(0)\approx1.1101715$. The integrality gap is almost negligible, especially when compared to the previous best boundary for $\gamma$ when $\beta=0$ provided by the heat-bath random walk algorithm in \cite{GJP03}, which is approximately 1.32.

\begin{theorem}\label{thm:alg-beta=0}
Let $A=\begin{bmatrix} 0 & 1 \\ 1 & \gamma \end{bmatrix}$, where $\gamma>\Gamma^*(0)=10 \cdot 11^{-\frac{11}{12}} $.  There is an FPTAS for $Z_A(G)$.
\end{theorem}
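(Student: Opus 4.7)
The plan is to mirror the proof of Theorem \ref{thm:FPTAS} (which proceeds via Theorem \ref{thm:rank-decay} and the SAW tree reduction), modifying only the place where the uniqueness threshold enters the amortized analysis. Specifically, I will redo the potential-method analysis of Section \ref{section-analysis} using the integer-constrained crucial degree $D^\ast=11$ in place of the real-valued $D$, which produces the sharper threshold $\Gamma^\ast(0)=10\cdot 11^{-11/12}$ in the special case $\beta=0$.

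First I would specialize the setup: with $\beta=0$ the recursion \eqref{eq:recursion} simplifies to $R_T^{\sigma_\Lambda}=\prod_{i=1}^d\frac{1}{R_{T_i}^{\sigma_\Lambda}+\gamma}$ and the range $(0,1]$ for free-vertex values still holds because $\gamma>1$. Define the integer analogues $D^\ast=11$ (the unique integer maximizer of $(d-1)d^{-d/(d+1)}$) and $X^\ast$ as the positive solution of $X^\ast=(X^\ast+\Gamma^\ast(0))^{-D^\ast}$, so that the symmetric fixed-point identities corresponding to \eqref{eq:fix-point-1} and \eqref{eq:fix-point-2} hold at $(\gamma,d,x)=(\Gamma^\ast(0),D^\ast,X^\ast)$. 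Replace the potential used in Section \ref{section-analysis} by
\[
\Phi(R)=R^{(D^\ast+1)/(2D^\ast)}=R^{6/11},
\]
and propagate this change through the definitions of $\varphi$, $y_v$, $\epsilon_v$ and the function $\alpha(d;x_1,\dots,x_d)$ of \eqref{eq:alpha}; the derivations leading to \eqref{eq:epsilon-delta}, \eqref{eq:epsilon-bootstrap}, \eqref{eq:amortized-decay}, and \eqref{eq:step-1} are algebraic and carry over without change because they never used that $D$ was real.

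The heart of the argument is the integer analogue of Lemma \ref{lemma-alpha}: for $\gamma>\Gamma^\ast(0)$ there is a constant $\alpha<1$ such that $\alpha(d;x_1,\dots,x_d)\le\alpha$ for every \emph{integer} $d\ge 1$ and every $x_1,\dots,x_d\in(0,1]$. I would prove this in two steps. Symmetrization via Jensen's inequality, exactly as in Lemma \ref{lemma-symmetrization}, reduces to bounding the symmetric function $\alpha(d,x)=\alpha(d;x,\dots,x)$; the concavity check there only needs $\beta=0$ and is even cleaner in this case. For the symmetric bound I want to show $\alpha(d,x)\le 1$ at $\gamma=\Gamma^\ast(0)$ for every integer $d\ge 1$ and $x\in(0,1]$, with equality at $(d,x)=(D^\ast,X^\ast)$. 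Unlike the real-valued case, I cannot set $\partial\alpha/\partial d=0$; instead I would fix $d$ and optimize over $x\in(0,1]$ using the same identity as in the proof of Lemma \ref{lemma-symmetric-extreme} (now with $\beta=0$, the critical-point equation becomes $\ln(x+\gamma)=\frac{2x}{(D^\ast+1)x-2D^\ast(x+\gamma)/(x+\gamma)}$-type closed form), obtaining a one-variable maximum $\tilde\alpha(d)=\max_{x\in(0,1]}\alpha(d,x)$ at $\gamma=\Gamma^\ast(0)$; then I would verify $\tilde\alpha(d)\le 1$ at each integer $d$. Since $\Gamma^\ast(0)=(d-1)d^{-d/(d+1)}\big|_{d=11}$ was defined as exactly the threshold making the unpotentialized Jacobian equal to $1$ at $d=11$, and the potential is tuned to $D^\ast=11$, one gets $\tilde\alpha(11)=\alpha(D^\ast,X^\ast)=1$ by the same identity calculation as in Lemma \ref{lemma-symmetric-extreme}, and $\tilde\alpha(d)<1$ for other integers $d$ by the same monotonicity of $(d-1)d^{-d/(d+1)}$ that makes $d=11$ the integer maximizer of $\Gamma^\ast$. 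Strict decrease of $\alpha(d,x)$ in $\gamma$ (proved as in Lemma \ref{lemma-symmetric-decay} with $\beta=0$) then upgrades this to a strict bound $\alpha<1$ whenever $\gamma>\Gamma^\ast(0)$.

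Once the integer version of Lemma \ref{lemma-alpha} is in hand, Lemma \ref{lemma-potentialized-decay} and the structural induction proving Theorem \ref{thm:rank-decay} go through unchanged (they never depended on $D$ being integer or real, only on having the single-step contraction constant $\alpha$ and the bootstrap estimate \eqref{eq:epsilon-bootstrap}). This yields computationally efficient correlation decay on arbitrary trees at $\gamma>\Gamma^\ast(0)$, whence the FPTAS follows exactly as in Section \ref{section-algorithms} via Weitz's SAW tree and the self-reduction to partition functions. The main obstacle is the step I highlighted above: ruling out, with the new integer-tuned potential $\Phi(R)=R^{6/11}$, that the symmetric amortized ratio $\alpha(d,x)$ exceeds $1$ for some integer $d\ne 11$ and some $x\in(0,1]$ at $\gamma=\Gamma^\ast(0)$. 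My expectation is that the same closed-form identities that forced the maximum of $\alpha(d,x)$ over $(d,x)\in\mathbb{R}_{\ge 1}\times(0,1]$ to sit at the stationary point $(D,X)$ in the real-valued analysis will, when specialized to integer $d$, place the per-$d$ maximum on the curve $x=X^\ast\cdot(\text{explicit expression in }d)$ and make the envelope over integer $d$ monotone away from $d=11$, matching the monotonicity structure of $(d-1)d^{-d/(d+1)}$.
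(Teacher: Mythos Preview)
Your overall plan matches the paper's: keep the algorithm and the SAW-tree reduction, switch to the potential $\Phi(R)=R^{6/11}$ tuned to $D^\ast=11$, reuse the symmetrization (Lemma \ref{lemma-symmetrization}), the bootstrap \eqref{eq:epsilon-bootstrap}, Lemma \ref{lemma-potentialized-decay}, and the induction unchanged, and redo only the symmetric bound over integer $d$; also, your claim $\alpha(D^\ast,X^\ast)=1$ is correct because that identity in Lemma \ref{lemma-symmetric-extreme} used only \eqref{eq:fix-point-1}--\eqref{eq:fix-point-2}, not stationarity.

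The one place you and the paper differ is exactly the step you flag as the obstacle, and your proposed route there does not work as stated. You plan to locate the per-$d$ optimum in $x$ via ``the same identity as in the proof of Lemma \ref{lemma-symmetric-extreme},'' but that identity comes from Lemma \ref{lemma-fix-point-id}, which was derived by differentiating in $d$ and using that the real $D$ is a stationary point of $\gamma(d)$. The integer $D^\ast=11$ is \emph{not} that stationary point (the paper notes $\Gamma(0)\neq\Gamma^\ast(0)$), so the logarithmic closed form you allude to is unavailable, and with it the bridge you draw from $\tilde\alpha(d)$ to the monotonicity of $(d-1)d^{-d/(d+1)}$. The paper instead exploits that at $\beta=0$ the symmetric ratio collapses to
\[
\alpha^\ast(d,x)=\frac{d\,x^{6/11}}{(x+\gamma)^{1+5d/11}},
\]
whose $x$-derivative vanishes at the rational point $x=\dfrac{6\Gamma^\ast(0)}{5(1+d)}$ (no logarithms). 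Substituting yields an explicit one-variable envelope $\hat\alpha(d)$, and one checks directly that $\hat\alpha$ is increasing for $d\le 11$, decreasing for $d\ge 12$, with $\hat\alpha(11)=1$; strict monotonicity in $\gamma$ then gives $\alpha<1$ for $\gamma>\Gamma^\ast(0)$. So replace your identity-based plan for the obstacle with this direct calculus and the proof goes through.
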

\begin{proof}
The algorithm is exactly the same as the algorithm in Section \ref{section-algorithms}. What we need is to establish a correlation decay. For this, we use a special potential function by
substituting $\beta=0$ and $D$ with $D^*=11$. Therefore the potential function is
\[\Phi(R)=R^{\frac{D^*+1}{2D^*}}=R^{\frac{6}{11}}.\]
The analysis remain the same as before, except Lemma \ref{lemma-symmetric-extreme}, which is the only place assuming continuous $d$ in the old analysis. We need to reprove Lemma \ref{lemma-symmetric-extreme} for integral $d$. The symmetric amortized decay $\alpha(d,x)$ is now written as
\[
\alpha^*(d,x)=\frac{d x^{\frac{6}{11}}}{(x+\gamma)^{1+\frac{5d}{11}}}.
\]
We are about to show that if $\gamma>\Gamma^*(0)$, there is a constant $\alpha<1$ such that $\alpha^*(d,x) \le \alpha<1$ for all $0\le x<1$.
Also by the strict monotonicity, we only need to prove (by substituting $\gamma$ with $\Gamma^*(0)$)
\[\alpha^*(d,x)= \frac{d x^{\frac{6}{11}}}{
(x+\Gamma^*(0))^{1+\frac{5d}{11}}} \le 1.\]
%
%
Take the partial derivative of  $\alpha^*$ over $x$, we have
\[\pderiv{\alpha^*}{x}=-\frac{d}{11 x^{\frac{5}{11}}(x+\Gamma^*(0))^{2+\frac{5d}{11}}} ((5+5 d)x -6\Gamma^*(0)).\]
For a fixed $d$, when $x< \frac{6 \Gamma^*(0)}{5+5 d}$, $\alpha^*(d,x)$ is monotonous increasing with $x$ and when
 $x> \frac{6 \Gamma^*(0)}{5+5 d}$, $\alpha^*(d,x)$ is monotonous decreasing with $x$. So  $\alpha^*(d,x)$ reach its
maximum when $x=\frac{6 \Gamma^*(0) }{5+5 d}$. Substituting this into   $\alpha^*(d,x)$, we have
\[\alpha^*(d,x) \leq \hat{\alpha}(d)= \frac{2^{\frac{1}{11}} 3^{\frac{6}{11}} 11^{\frac{5(1+d)}{12}} d (1+d)^{\frac{5}{11}}}{(11+5 d) (10+\frac{12}{1+d})^{\frac{5 d}{11}}}.\]
We can verify that $\hat{\alpha}(d)$ is monotonously increasing when $d\leq 11$ and decreasing when $d\geq 12$ and it reach its maximum when $d=11$. The maximum is $\hat{\alpha}(11)=1$. This completes the proof.
\end{proof}

For $\beta=0$, it is very related to the hardcore model. We can make use of the hardness result in \cite{Sly10} and \cite{GGSVY11} to get a tight hardness result as follows.

\begin{theorem}\label{thm:hard-beta=0}
Let $A=\begin{bmatrix} 0 & 1 \\ 1 & \gamma \end{bmatrix}$, where $\gamma<\Gamma^*(0)= 10 \cdot 11^{-\frac{11}{12}} $.  There is no
FPRAS for $Z_A(G)$ unless $NP=RP$.
\end{theorem}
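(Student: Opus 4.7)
The plan is to reduce from the known hardness of approximating the hardcore partition function. When $\beta=0$, any configuration that assigns adjacent vertices to state $0$ (blue) contributes nothing, so only configurations whose blue set $S$ is an independent set matter; for such a configuration, each edge inside $V\setminus S$ contributes $\gamma$ and each edge between $S$ and $V\setminus S$ contributes $1$. Therefore
\[
Z_A(G)=\sum_{S\text{ indep.}}\gamma^{\,|E(V\setminus S)|}.
\]
Restricting attention to $(d+1)$-regular graphs $G$, the identity $|E(V\setminus S)|=|E|-(d+1)|S|$ for independent $S$ gives
\[
Z_A(G)=\gamma^{|E|}\sum_{S\text{ indep.}}\lambda^{|S|}=\gamma^{|E|}\,Z_{\mathrm{HC}}(G,\lambda),\qquad \lambda=\gamma^{-(d+1)},
\]
where $Z_{\mathrm{HC}}$ is the usual hardcore partition function. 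Hence any FPRAS for $Z_A$ on general graphs immediately yields an FPRAS for $Z_{\mathrm{HC}}(\cdot,\lambda)$ restricted to $(d+1)$-regular graphs.

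Next I would match the parameter regimes. By the explicit computation at the start of this section, $\Gamma^*(0)=10\cdot 11^{-11/12}$ is attained at $d^{*}=11$, so the hypothesis $\gamma<\Gamma^{*}(0)$ entails $\gamma<(d^{*}-1)(d^{*})^{-d^{*}/(d^{*}+1)}$. Raising both sides to the power $-(d^{*}+1)$ shows that $\lambda=\gamma^{-(d^{*}+1)}$ exceeds the hardcore uniqueness threshold on the infinite $(d^{*}+1)$-regular tree:
\[
\lambda>\lambda_{c}(d^{*})=\frac{(d^{*})^{d^{*}}}{(d^{*}-1)^{d^{*}+1}}.
\]
This is just the algebraic mirror, in the hardcore variable $\lambda$, of the two-spin uniqueness condition used throughout the paper.

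Finally, I would invoke the hardness result of Sly \cite{Sly10}, completed for the entire non-uniqueness regime by Galanis, Ge, \v{S}tefankovi\v{c}, Vigoda and Yang \cite{GGSVY11}: for every integer $\Delta\ge 3$ and every $\lambda>\lambda_{c}(\Delta-1)$, there is no FPRAS for $Z_{\mathrm{HC}}(\cdot,\lambda)$ on graphs of maximum degree $\Delta$ unless $\mathrm{NP}=\mathrm{RP}$. Applying this with $\Delta=d^{*}+1=12$ and $\lambda=\gamma^{-12}$ contradicts the putative FPRAS for $Z_A$, completing the proof. There is no real obstacle here beyond getting the identification $\lambda=\gamma^{-(d+1)}$ correct on regular graphs; once in place, hardness is inherited verbatim from the hardcore model, which is precisely why the thresholds $\Gamma^{*}(0)$ and $\lambda_{c}$ line up to give a tight dichotomy.
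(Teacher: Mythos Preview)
Your reduction is correct as far as it goes, and the parameter matching is exactly right: on a $12$-regular graph, $Z_A(G)=\gamma^{|E|}Z_{\mathrm{HC}}(G,\gamma^{-12})$, and $\gamma<10\cdot 11^{-11/12}$ is equivalent to $\gamma^{-12}>\lambda_c(11)$. But there is a genuine gap in the last step, and it is precisely the ``no real obstacle'' you dismiss.

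The hardness results you cite from \cite{Sly10,GGSVY11} are stated for graphs of \emph{maximum degree} $\Delta$, not for $\Delta$-regular graphs. Your reduction, however, only identifies $Z_A$ with a hardcore partition function when the input graph is regular (the identity $|E(V\setminus S)|=|E|-(d+1)|S|$ fails otherwise). So an FPRAS for $Z_A$ on arbitrary graphs yields an FPRAS for $Z_{\mathrm{HC}}(\cdot,\gamma^{-12})$ only on $12$-regular graphs, and this does not immediately contradict hardness on the larger class of max-degree-$12$ graphs: the hard instances produced in \cite{Sly10} are not regular as constructed. The paper handles this explicitly by observing that Sly's reduction starts from \textsc{Max-Cut}, that the gadget parameters can be tuned so that a regular \textsc{Max-Cut} instance yields a regular hardcore instance, and then supplying a short gadget argument that \textsc{Max-Cut} remains NP-hard on regular graphs. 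You need either this argument or some other way to obtain hardcore hardness on regular inputs; without it the contradiction is not established.
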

\begin{proof}
The starting point is the hardness result for hardcore model in~\cite{Sly10}. For hardcore model, the partition function is
\[Z_\lambda(G)=\sum_{S \in I(G)}\lambda^{|S|},\]
where the summation goes over all the independent set of $G$. For $\beta=0$, nonzero terms in the summation
\[ Z_A(G)=\sum_{\sigma \in 2^{V}} \prod_{(i,j) \in E} A_{\sigma(i), \sigma(j)}\]
have a one-to-one corresponding with all the independent sets of $G$. The term indexed by $\sigma$ is nonzero iff $\sigma^{-1}(0)$
is an independent set of $G$.
So $Z_A(G)$ can be rewritten as
\[Z_A(G)= \gamma^{|E|} \sum_{S \in I(G)} \prod_{v\in S} \gamma^{-d(v)}, \]
where $d(v)$ is the degree of vertex $v$. If $G$ is a $d$-regular graph, this summation can be further rewritten as
\[Z_A(G)= \gamma^{|E|} \sum_{S \in I(G)}  (\gamma^{-d})^{|S|}. \]
Since $\gamma^{|E|}$ is a global factor which can be easily computed, the computation for $Z_A(G)$ of $d$-regular graph $G$ is equivalent to the partition
function of the hardcore model on $G$ with fugacity parameter $\gamma^{-d}$.
In \cite{Sly10} and \cite{GGSVY11}, it is proved that there is no FPRAS for the partition function for hardcore model on graphs
with maximum degree $d$ when the  fugacity parameter $\lambda>\frac{(d-1)^{d-1}}{(d-2)^d}$ unless NP$=$RP,
when $d\geq 6$. If we can strength the hardness result to $d$-regular graph, we can use the equivalence relation to get
a hardness result for the the two-spin system model when $\beta=0$ and $\gamma^{-d}>\frac{(d-1)^{d-1}}{(d-2)^d}$.
Let $d=12$, the inequality gives $\gamma<10 \cdot 11^{-\frac{11}{12}}$, as what we claimed.
In the following, we show
that their hardness proof for hardcore model indeed already works for $d$-regular graph.

To prove the hardness of the hardcore model. A reduction from the max-cut problem to the hardcore partition function is built in \cite{Sly10}. The hard instance of the hardcore problem in their reduction is almost $d$-regular
 except some vertices with degree $d-1$. It can be easily verified in their gadget that if we are starting from a max-cut instant in a regular graph, we can choose the suitable parameter and build the reduction to a $d$-regular instance in the hardcore model. So it remains to show that max-cut on a regular graph is already NP-hard.

 This can be done by a simple reduction from max-cut on arbitrary graph to a max-cut instance of a regular graph. Let $G=(V,E)$ be a given max-cut instance. Let $\Delta$ be the maximum degree of $G$.  Then the new instance is of $2\Delta$-regular.  The new graph $G'=(V',E')$ is defined as follows:
\begin{itemize}
  \item  For every vertex $v\in V$, we construct $1+2(\Delta-d(v))$ vertices in $V'$, we name them as $v$ and  $v_i^{+}, v_i^{-}$ for $i=1,2,\ldots, \Delta-d(v)$. These are all the vertices in the new graph $G'$.
  \item For every $v\in V$ and $i\in \{1,2,\ldots, \Delta-d(v)\}$, we connect $2 \Delta -1$ edges in $G'$ between
  $v_i^{+}$ and $v_i^{-}$, one edge between $v$ and $v_i^{+}$, and one edge between $v$ and $v_i^{-}$.
  \item For  every $(u,v) \in E$ be an edge of $E$, we connect two edges between $u$ and $v$ in $G'$.
\end{itemize}
It is easy to see that all the vertices in graph $G'$ have degree  $2\Delta$. For a max-cut for $G'$, we will always put
$v_i^{+}$ and $v_i^{-}$ into different sides for every $v\in V$ and $i\in \{1,2,\ldots, \Delta-d(v)$. If not, one can improve the cut by moving one of them to the other side. Given that $v_i^{+}$ and $v_i^{-}$ are always in different sides, the contribution in the cut for the edges between $v_i^{+}$ and $v_i^{-}$, $v$ and $v_i^{+}$, $v$ and $v_i^{-}$ are all fixed. The remaining part is identical to the original graph except that we double every edge. This finishes the reduction and completes the proof. 
\end{proof}
}

\section{Open Questions}
Our analysis of correlation decay assumes a continuous degree $d$ because of the the using of differentiation. An open question is to improve the analysis to integral $d$ and the uniqueness threshold realized by infinite $(d+1)$-regular trees $\widehat{\mathbb{T}}^d$. It will be very interesting to prove a hardness result beyond this threshold and observe the similar transition of computational complexity in spin systems as in the hardcore model~\cite{Sly10}.

In this paper, we consider the two-state spin systems without external fields. It will be interesting to extend our result to cases where there is an external field as in \cite{GJP03}. Since the hardcore model can be expressed as a two-state spin system with an external field. This will give a unified theory covering the previous results for the hardcore model.

Most importantly, it will be interesting to apply the general technique in this paper to design FPTAS for other counting problems.





\end{document}